	\theoremstyle{thmstyleone}%
	\newtheorem{theorem}{Theorem}
	\newtheorem{proposition}[theorem]{Proposition}%
	\theoremstyle{thmstyletwo}%
	\newtheorem{example}{Example}%
	\newtheorem{remark}{Remark}%
	\newtheorem{lemma}{Lemma}%
	\theoremstyle{thmstylethree}%
	\newtheorem{definition}{Definition}%
\begin{document}
	
	\title[Article Title]{On construction of quantum codes with dual-containing quasi-cyclic codes}
	
	
	\author*[1]{\fnm{Chaofeng} \sur{Guan}}\email{gcf2020yeah.net}
	
	\author[1]{\fnm{Ruihu} \sur{Li}}\email{liruihu@aliyun.com}
	\equalcont{These authors contributed equally to this work.}
	
	\author[1]{\fnm{Liangdong } \sur{Lu}}\email{kelinglv@163.com}
	\equalcont{These authors contributed equally to this work.}
	
	\author[1]{\fnm{Yang } \sur{Liu}}\email{liu\_yang10@163.com}
	\equalcont{These authors contributed equally to this work.}
	
	\author[1]{\fnm{Hao } \sur{Song}}\email{songhao\_kgd@163.com}
	\equalcont{These authors contributed equally to this work.}
	
	\affil*[1]{\orgdiv{Fundamentals Department}, \orgname{Air Force Engineering University}, \orgaddress{ \city{Xi'an}, \state{Shaanxi}, \postcode{710051}, \country{ P. R. China}}}

	
	\abstract{
	One of the main objectives of quantum error-correction theory is to construct quantum codes with optimal parameters and properties. 
	In this paper, we propose a class of 2-generator quasi-cyclic codes and study their applications in the construction of quantum codes over small fields. 
	Firstly, some sufficient conditions for these 2-generator quasi-cyclic codes to be dual-containing concerning Hermitian inner product are determined. 
	Then, we utilize these Hermitian dual-containing quasi-cyclic codes to produce quantum codes via the famous Hermitian construction.
	Moreover, we present a lower bound on the minimum distance of these quasi-cyclic codes, which is helpful to construct quantum codes with larger lengths and dimensions.
	As the computational results, many new quantum codes that exceed the quantum Gilbert-Varshamov bound are constructed over $F_q$, where $q$ is $2,3,4,5$. In particular, 16 binary quantum codes raise the lower bound on the minimum distance in Grassl's table \cite{Grassl:codetables}. In nonbinary cases, many quantum codes are new or have better parameters than those in the literature.
	}
	
	\keywords{quantum codes, quasi-cyclic codes, Hermitian inner product, Hermitian construction}
	
	
	
	\maketitle
	\section{Introduction}\label{sec1}
	
	Quantum error-correction codes (QECCs) play a prominent role in the implementation of quantum computing by virtue of they can protect fragile qubits from noises (decoherence).
	Since the initial work of Shor \cite{shor1995scheme} and Steane \cite{Steane1996}, the theory of QECCs has developed rapidly.
	With the efforts of scholars, the connection between QECCs and classical codes has been gradually established, so that the problem of constructing QECCs is transformed into the problem of constructing self-orthogonal or dual-containing classical codes over $F_q$ or $F_{q^2}$ under different inner products \cite{Calderbank1996,Calderbank1998,Ashikhmin2001,Ketkar2006}.

	In \cite{Grassl:codetables}, Grassl et al. collected the best-known binary QECCs and founded an online code table, which is considered to be the most widely used and most challenging to break one.
	In addition, Edel et al. \cite{Edel2022} cataloged a discrete but more involved online code table, which is an essential comparison for nonbinary QECCs.
	In this paper, we will provide many QECCs that have better parameters than those in \cite{Grassl:codetables} and \cite{Edel2022}.
	
	Quasi-cyclic codes are a natural extension of cyclic codes, with rich algebraic structure and excellent properties.  In \cite{kasami1974gilbert}, Kasami et al. proved that quasi-cyclic codes satisfy the modified Gilbert–Varshamov bound, i.e., quasi-cyclic codes are asymptotically good.
	In addition, a mass of record-breaking classical codes were constructed from quasi-cyclic codes
	\cite{siap2000new,daskalov2003new,chen2018some,akre2021generalization}. 
	In \cite{hagiwara2011spatially},  Hagiwara et al. first manipulated quasi-cyclic codes to construct quantum LDPC codes with long lengths.
	In 2018, Galindo et al. \cite{galindo2018quasi} proposed an original method to construct quantum codes employing quasi-cyclic codes, which attracted many authors to devote themselves to the investigation of quasi-cyclic codes to construct QECCs. Thereby, a series of record-breaking QECCs were constructed by quasi-cyclic codes \cite{ezerman2019good,lv2019new,Lv2019newnon,lv2020quantum,lv2020constructions,lv2019quantum,lv2020explicit,guan2021new,yao2021new}.
	
	Inspired by the above work, we propose a new method for constructing QECCs via quasi-cyclic codes.
	This paper is organized as follows.
	In Section 2, fundamentals of linear codes and  QECCs are introduced.
	Section 3 presents a class of 2-generator quasi-cyclic codes and their sufficient conditions for dual-containing under Hermitian inner product. In addition, we derive a lower bound on the minimum Hamming distance for the relevant codes. As applications, many new quantum codes over small finite fields are constructed. 
	Conclusions are given in Section 4.
	\section{Preliminaries}\label{sec2}
	
	A linear code $\mathcal{C}$ of length $n$ over $F_{q^2}$ is an non-empty subset of  $F_{q^2}^{n}$, and can be denoted as $[n,k,d]_{q^2}$.
	Let $\vec{u}=(u_{0},  \ldots, u_{n-1})$ and 
	$\vec{v}=(v_{0},\ldots, v_{n-1})$ be vectors in $ F_{q^2}^{n}$, Hermitian inner product of them can be defined as 
	$\langle\vec{u}, \vec{v}\rangle_{h}=\sum_{i=0}^{n-1}\left(u_{i} v^q_{i}\right)$. 
	The weight of $\vec{u}$, denoted by $wt(\vec{u})$, is the number of nonzero coordinates in $\vec{u}$.
	The minimum Hamming distance of $\mathcal{C}$ is $d(\mathcal{C})=\min \left\{wt(\vec{u}) \mid \vec{u} \in \mathcal{C}\right\}.$
	The Hermitian dual code of $\mathcal{C}$ can be denoted as 
	$\mathcal{C}^{\perp_{q}}=\{\vec{v} \in F_{q^2}^{n} \mid\langle\vec{u}, \vec{v}\rangle_{h}=0, \forall \vec{u} \in \mathcal{C}\}$, and parameters of $\mathcal{C}^{\perp_{q}}$ is $[n,n-k,d^\perp]_{q^2}$.
	If $\mathcal{C} \subset C^{\perp_{q}}$, then we can say $\mathcal{C}$ is a Hermitian self-orthogonal code and $\mathcal{C}^{\perp_{q}}$ is a Hermitian dual-containing code.

	If $\mathcal{C}$ is closed under a cyclic shift, i.e.,  for any $c=\left(c_{0}, c_{1}, \ldots, c_{n-1}\right)\in \mathcal{C}$, there will 
	$c^\prime =\left(c_{n-1}, c_{0}, \ldots, c_{n-2}\right)\in \mathcal{C}$,  then we can say $\mathcal{C}$ is a cyclic code. 
	Define the quotient ring  $\mathcal{R}=F_{2}[x] /\left\langle x^{n}-1\right\rangle$.
	If $\mathcal{C}$ is generated by a monic divisor  $g(x)$  of  $x^{n}-1$, i.e., $\mathcal{C}=\langle g(x)\rangle$ and $g(x)\mid x^n-1$, then $g(x)$ is called generator polynomial of $\mathcal{C}$.
	Let $gcd(n, q) = 1$ and $ \Omega_{n}=\{0,1, \ldots, n-1\}$  and  $\varsigma $  be a primitive  $n$-th root of unity in some extended fields of  $F_{q^2}$.  
	Then defining set of $\mathcal{C}=\langle g(x)\rangle$  can be written as $T=\left\{i \in  \Omega_{n} \mid g\left(\varsigma  ^{i}\right)=0\right\}$. 
	If $C_{i}$ is a $q^2$-cyclotomic coset modulo  $n$, then it can be denoted as
	$C_{i}=\{i, i q^2, \ldots, i q^{2(s-1)} \mid i \in \Omega_{n}\}$, where  $s$  is the smallest positive integer satisfied with  $i q^{2s} \equiv i \bmod n$.
	For each $i \in \Omega_{n}$, the cyclotomic cosets $C_i$ is skew symmetric if $n-qi\in C_i$; otherwise it is skew asymmetric. Skew asymmetric cosets $C_i$ and $C_{n-qi}$ occur in pairs and are called skew asymmetric pairs, abbreviated as $(C_i, C_{n-qi})$. If  $T \cap T^{-q}=\emptyset$ and any two cosets in  $T$  cannot form a skew asymmetric pair, then  $\mathcal{C}^{\perp_{h}} \subseteq \mathcal{C}$, i.e., $g(x)\mid g^{\perp_q}(x)$.

	If a cyclic shift of any codeword of $\mathcal{C}$ by $n$ positions is also codeword, then $\mathcal{C}$ is a quasi-cyclic code. The length of $\mathcal{C}$ is $nl$.  
	The generator matrix of quasi-cyclic code is composed of circulant matrices. An $n\times n$ circulant matrix $M$ is defined as
	
	$$
	M=\left(\begin{array}{ccccc}
	m_{0} & m_{1} & m_{2} & \ldots & m_{n-1} \\
	m_{p-1} & m_{0} & m_{1} & \ldots & m_{n-2} \\
	\vdots & \vdots & \vdots & \vdots & \vdots \\
	m_{1} & m_{2} & m_{3} & \ldots & m_{0}
	\end{array}\right).$$

	The generator matrix of a 2-generator quasi-cyclic code with index $2$ can be transformed into rows of $n \times n$ circulant matrices by suitable permutation of columns, which has the following form
	$$
	G=\left(\begin{array}{cccc}
	M_{1,1} & M_{1,2}\\
	M_{2,1} & M_{2,2}\\
	\end{array}\right),$$
	where  $M_{i, j}$  is circulant matrices determined by polynomial  $m_{i, j}(x)$, where  $1 \leq i \leq 2$  and  $1 \leq j \leq 2$. 
	
	
	A QECC $\mathcal{Q}$ of length $n$ is a $K$-dimensional subspace of $q^n$-dimensional Hilbert space $(\mathbb{C}^{q})^{\otimes n}$, where $\mathbb{C}$ represents complex field and $(\mathbb{C}^{q})^{\otimes n}$ is the $n$-fold tensor power of $\mathbb{C}^q$. 
	Three basic parameters can be used to describe $\mathcal{Q}$: length $n$, dimension $k$ and minimum distance $d$, so that $\mathcal{Q}$ can be denoted as $[[n,k,d]]_{q}$, where $k=log_{q}K$.
	With \cite{Ketkar2006}, there is a relationship between QECCs and classical dual-containing linear codes under Hermitian inner product. 
	
	\begin{lemma}\label{construction}
	(\cite{Ketkar2006}, Hermitian construction) A Hermitian dual-containing  $[n, k]_{q^{2}}$  linear code  $\mathcal{C}$  such that there are no vectors of weight less than  $d$  in  $\mathcal{C} \backslash C^{\perp_{h}}$  yields a pure QECC with parameters  $[[n, 2 k-n, d]]_{q}$. 
	\end{lemma}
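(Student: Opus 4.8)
The plan is to deduce the statement from the general correspondence between $q$-ary stabilizer codes and symplectic self-orthogonal additive codes over $F_q^{2n}$, by recasting the Hermitian structure over $F_{q^2}$ as a symplectic one over $F_q$. First I would invoke the standard stabilizer theorem (as in \cite{Ketkar2006}): if $C \subseteq F_q^{2n}$ is an additive code with $|C| = q^{n-\kappa}$ that is self-orthogonal with respect to the symplectic form $\langle (\vec a\,|\,\vec b),(\vec a'\,|\,\vec b')\rangle_s = \vec a \cdot \vec b' - \vec b \cdot \vec a'$, then there is an $[[n,\kappa,d]]_q$ stabilizer code, where $d$ is the minimum symplectic weight of $C^{\perp_s}\setminus C$, and the code is pure when no nonzero vector of $C^{\perp_s}$ of weight below $d$ lies in $C$. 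The task then reduces to manufacturing such a $C$ out of $\mathcal{C}$.

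The bridge is a fixed $F_q$-linear map $\phi\colon F_{q^2}^n \to F_q^{2n}$ that expands each coordinate in a basis $\{1,\alpha\}$ of $F_{q^2}/F_q$, written $\phi(\vec u)=(\vec a\,|\,\vec b)$. Two properties must be checked. First, $\phi$ is weight-preserving, in the sense that the Hamming weight of $\vec u$ over $F_{q^2}$ equals the symplectic weight of $\phi(\vec u)$; this is a coordinatewise verification, since a coordinate $u_i$ vanishes exactly when both components of its image vanish. Second, and this is the crux, $\phi$ must intertwine the two orthogonalities: one shows that $\mathrm{Tr}_{q^2/q}\langle \vec u,\vec v\rangle_h$ agrees, up to the fixed basis change, with $\langle \phi(\vec u),\phi(\vec v)\rangle_s$, so that $\phi(\mathcal{D}^{\perp_h}) = \phi(\mathcal{D})^{\perp_s}$ for every $F_{q^2}$-linear $\mathcal{D}$. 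The delicate point is that Hermitian orthogonality over $F_{q^2}$ is an $F_{q^2}$-bilinear condition while symplectic orthogonality is only $F_q$-bilinear; the trace $\mathrm{Tr}_{q^2/q}$ reconciles them, and for $F_{q^2}$-linear codes one verifies that the trace-Hermitian dual coincides with the Hermitian dual (using that $\lambda\beta$ runs over all of $F_{q^2}$ as $\lambda$ does, and the trace is not identically zero), so no information is lost.

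With the bridge in place I would set $C = \phi(\mathcal{C}^{\perp_h})$. Since $\mathcal{C}^{\perp_h}\subseteq \mathcal{C}$ by hypothesis, we have $(\mathcal{C}^{\perp_h})^{\perp_h} = \mathcal{C} \supseteq \mathcal{C}^{\perp_h}$, so $\mathcal{C}^{\perp_h}$ is Hermitian self-orthogonal and hence $C$ is symplectic self-orthogonal. Counting gives $|C| = q^{2(n-k)} = q^{n-(2k-n)}$, so $\kappa = 2k-n$, and the hypothesis $\mathcal{C}^{\perp_h}\subseteq\mathcal{C}$ is precisely what forces $2k-n \ge 0$. By the dual correspondence $C^{\perp_s} = \phi(\mathcal{C})$, whence $C^{\perp_s}\setminus C = \phi(\mathcal{C}\setminus \mathcal{C}^{\perp_h})$; combined with weight preservation, the minimum symplectic weight of $C^{\perp_s}\setminus C$ equals the minimum Hamming weight over $\mathcal{C}\setminus\mathcal{C}^{\perp_h}$, which is $d$ by assumption, and the stated absence of lighter vectors in that coset yields the distance together with purity. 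I expect the main obstacle to be the careful bookkeeping behind the inner-product correspondence and the coincidence of the Hermitian and trace-Hermitian duals, since this is where the explicit choice of basis and the behaviour of $\mathrm{Tr}_{q^2/q}$ genuinely enter.
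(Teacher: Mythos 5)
The paper itself gives no proof of this lemma; it is quoted verbatim from \cite{Ketkar2006}, so your attempt can only be measured against the standard proof given there. Your architecture is exactly that standard route: the stabilizer correspondence over $F_q^{2n}$, a coordinate-expansion map $\phi$, weight preservation, the counting $|C|=q^{2(n-k)}=q^{n-(2k-n)}$, the identification $C^{\perp_s}=\phi(\mathcal{C})$, and the coincidence of the Hermitian dual with the induced $F_q$-bilinear dual for $F_{q^2}$-linear codes. All of that bookkeeping is sound.

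The gap is in the bridge identity itself, which you correctly single out as the crux. The form $(\vec u,\vec v)\mapsto \mathrm{Tr}_{q^2/q}\langle \vec u,\vec v\rangle_h$ is \emph{symmetric}, because $\langle \vec u,\vec v\rangle_h^q=\langle \vec v,\vec u\rangle_h$, whereas the symplectic form is alternating. In odd characteristic a nonzero bilinear form cannot be both, so no choice of basis $\{1,\alpha\}$ (nor any fixed $F_q$-linear change of coordinates) can make $\mathrm{Tr}_{q^2/q}\langle \vec u,\vec v\rangle_h$ agree with $\langle\phi(\vec u),\phi(\vec v)\rangle_s$, even up to a nonzero scalar. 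Your identity does hold in characteristic $2$, where coordinatewise $u_iv_i^q+u_i^qv_i=\mathrm{Tr}_{q^2/q}(\alpha)\,(a_ib_i'+a_i'b_i)$, but this lemma is applied in the paper with $q=3$ and $q=5$. The correct bridge is the antisymmetrization: writing $z=\langle\vec u,\vec v\rangle_h$ and $u_i=a_i+\alpha b_i$, $v_i=a_i'+\alpha b_i'$, one checks coordinatewise that $(z-z^q)/(\alpha^q-\alpha)=\vec a\cdot\vec b'-\vec b\cdot\vec a'$, which is ($F_q$-valued and) precisely the trace-alternating form of \cite{Ketkar2006} before composing with $\mathrm{Tr}_{q/p}$. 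This form still vanishes whenever $z=0$, so Hermitian self-orthogonality transfers to symplectic self-orthogonality; and your scaling argument survives, since $\lambda^q z-\lambda z^q=0$ for all $\lambda\in F_{q^2}$ forces $z=0$, so the alternating dual of an $F_{q^2}$-linear code equals its Hermitian dual and the rest of your proof goes through verbatim. One further repair: for non-prime $q$ (the paper also uses $q=4$) the stabilizer correspondence for additive codes is governed by the trace-symplectic form $\mathrm{Tr}_{q/p}(\vec a\cdot\vec b'-\vec b\cdot\vec a')$, not the plain symplectic form as you state it; your invocation is legitimate only because the code $C=\phi(\mathcal{C}^{\perp_h})$ you feed in is $F_q$-linear, for which the two duals coincide, and this should be said explicitly.
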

	
	Typically, QECCs can also be derived from existing ones by the following propagation rules, which will be utilized later.
	
	\begin{lemma}\label{DerivativeCodes}\label{propagation_rule}
	(\cite{Ketkar2006}, propagation rules) If an  $[[n, k, d]]_{q}$ pure QECC exists. Then the following QECCs also exist.\\
	(1)  $[[n, k-1, d]]_{q}$ for $k\ge 1$;\\
	(2)  $[[n+1, k, d]]_{q}$ for $k>0$.
	\end{lemma}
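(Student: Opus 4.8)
The plan is to establish both rules within the stabilizer framework underlying Lemma~\ref{construction}, using the standard dictionary between an $[[n,k,d]]_q$ quantum code and a classical $F_q$-linear code $C \subseteq F_q^{2n}$ that is self-orthogonal for the symplectic (alternating) trace form $\langle\cdot,\cdot\rangle_{s}$. In this dictionary one has $C \subseteq C^{\perp_{s}}$, $|C| = q^{n-k}$, and the minimum distance is $d = \min\{wt(\vec v) : \vec v \in C^{\perp_{s}}\setminus C\}$; the Hermitian situation of Lemma~\ref{construction} is the special case in which $C$ is obtained from a Hermitian self-orthogonal code over $F_{q^2}$, and the alternating form makes every vector automatically self-orthogonal. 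It therefore suffices to manufacture, from $C$, a new self-orthogonal code with the desired cardinality and to control the minimum weight of the set ``dual minus code''.

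For rule~(1) I would \emph{enlarge the stabilizer}. Since $k \ge 1$ we have $C \subsetneq C^{\perp_{s}}$, so I may choose any $\vec v \in C^{\perp_{s}}\setminus C$ and set $C' = \langle C, \vec v\rangle$. As $\vec v \in C^{\perp_{s}}$ it is orthogonal to $C$, and as the form is alternating it is orthogonal to itself; hence $C \subsetneq C' \subseteq C'^{\perp_{s}}$ with $|C'| = q^{n-k+1} = q^{n-(k-1)}$, which is precisely a self-orthogonal code for an $[[n,k-1,d']]_q$ code. For the distance, $C \subseteq C'$ forces $C'^{\perp_{s}} \subseteq C^{\perp_{s}}$, so that $C'^{\perp_{s}}\setminus C' \subseteq C^{\perp_{s}}\setminus C$ and therefore $d' \ge d$; this yields the asserted $[[n,k-1,d]]_q$ code.

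For rule~(2) I would \emph{append a stabilised coordinate}. Split $F_q^{2(n+1)} = F_q^{2n}\oplus F_q^{2}$, the new summand carrying the symplectic pair of an ancillary $(n+1)$-th qudit, and let $\vec e$ be the generator of its ``$Z$'' direction. Put $\tilde C = C \oplus \langle \vec e\rangle$. Then $\vec e$ is self-orthogonal and orthogonal to $C$, so $\tilde C \subseteq \tilde C^{\perp_{s}}$ with $|\tilde C| = q^{n-k+1} = q^{(n+1)-k}$, giving an $[[n+1,k,d']]_q$ code. A direct computation with the symplectic form gives $\tilde C^{\perp_{s}} = C^{\perp_{s}}\oplus\langle \vec e\rangle$; since $\vec e \in \tilde C$, every element of $\tilde C^{\perp_{s}}\setminus\tilde C$ may have its ancillary coordinate cleared without changing membership, so $d' = \min\{wt(\vec v) : \vec v \in C^{\perp_{s}}\setminus C\} = d$, as claimed.

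The cardinality bookkeeping and the symplectic-dual identities are routine; the one point needing genuine care is the distance, i.e.\ guaranteeing that neither operation introduces an undetectable low-weight error. In rule~(1) this is exactly the inclusion $C'^{\perp_{s}}\setminus C' \subseteq C^{\perp_{s}}\setminus C$, and in rule~(2) it is the splitting $\tilde C^{\perp_{s}} = C^{\perp_{s}}\oplus\langle\vec e\rangle$ that confines any new weight to the freshly added coordinate, which the stabiliser then removes. These two containments are the crux of the argument.
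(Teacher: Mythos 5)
A preliminary remark: the paper itself offers no proof of this lemma --- it is imported verbatim from \cite{Ketkar2006} --- so the comparison is with the standard stabilizer-theoretic proof, and that is essentially what you have written: rule (1) by enlarging the stabilizer with a normalizer element, rule (2) by appending a coordinate stabilized in the $Z$ direction. Your rule (2) is sound: the identity $\tilde C^{\perp_s}=C^{\perp_s}\oplus\langle\vec e\rangle$ is correct, every element of $\tilde C^{\perp_s}\setminus\tilde C$ has the form $(\vec w,\beta\vec e)$ with $\vec w\in C^{\perp_s}\setminus C$, and minimizing over $\beta$ gives $d'=d$. Your rule (1) is also correct whenever $k\ge 2$.

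The genuine gap is the boundary case $k=1$ of rule (1), which the statement explicitly allows ($k\ge 1$), and it is precisely the place where the purity hypothesis --- which your proof never invokes --- is indispensable. When $k=1$, the enlarged code $C'=\langle C,\vec v\rangle$ has $|C'|=q^n$, hence is symplectically self-dual, $(C')^{\perp_s}=C'$; the set $(C')^{\perp_s}\setminus C'$ is empty and your inclusion $(C')^{\perp_s}\setminus C'\subseteq C^{\perp_s}\setminus C$ carries no information. The distance of the resulting $[[n,0,d']]_q$ code is, by convention, the minimum weight of $C'\setminus\{\vec 0\}$, and to bound it one must use purity: purity says every nonzero vector of $C^{\perp_s}$ has weight at least $d$, and since $C'\subseteq C^{\perp_s}$, this gives $d'\ge d$. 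The omission is not cosmetic, because without purity your construction really does fail there: Shor's degenerate $[[9,1,3]]_2$ code has weight-$2$ elements in its stabilizer $C$, any enlargement $C'$ still contains them, and so every code your construction produces has distance at most $2<3$. The clean repair, which subsumes your $k\ge 2$ argument as well, is the chain $(C')^{\perp_s}\setminus C'\subseteq (C')^{\perp_s}\setminus\{\vec 0\}\subseteq C^{\perp_s}\setminus\{\vec 0\}$ combined with purity of $C$. (A lesser quibble: for general additive stabilizer codes the trace-symplectic form is only $F_p$-bilinear, so adjoining one vector multiplies $|C|$ by $p$ rather than $q$; either iterate the enlargement $m$ times when $q=p^m$, or restrict, as the paper's Hermitian construction implicitly does, to $F_q$-linear symplectic codes.)
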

	
	Like the classical counterpart, one of the central tasks of quantum error correction is to construct QECCs with suitable parameters. When length $n$ and dimension $k$ are fixed, we want to obtain a sizable minimum distance $d$. Conversely, when minimum distance $d$ is fixed, we want the rate $\frac{k}{n}$ to be larger.
	Some bounds are helpful to judge the performance of QECCs, among which the quantum Gilbert-Varshamov (GV) bound \cite{feng2004finite} is widely used. 
	\begin{lemma}
	(\cite{feng2004finite}, quantum GV bound) Let $n> k_{GV}\ge 2$, and $n\equiv k_{GV} \left ( \bmod \ 2 \right )$, $d\ge 2$.
	Then, there exists an $[[n, k_{GV}, d]]_q$ pure QECC if the inequality is met.
	
	\begin{equation}\label{eqGV}\frac{q^{n-k_{GV}+2}-1}{q^{2}-1}>\sum_{i=1}^{d-1}\left(q^{2}-1\right)^{i-1}\left(\begin{array}{c}n \\i\end{array}\right).\end{equation}
	\end{lemma}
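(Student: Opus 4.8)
The plan is to convert the statement into a purely classical existence question over $F_{q^2}$ and then settle it by a first-moment (expected count) argument over Hermitian self-orthogonal codes. By the Hermitian construction (Lemma~\ref{construction}) it suffices to produce a Hermitian self-orthogonal code $D\subseteq F_{q^2}^{\,n}$ of dimension $k'=\tfrac{n-k_{GV}}{2}$ (an integer since $n\equiv k_{GV}\ (\mathrm{mod}\ 2)$) whose dual $\mathcal{C}=D^{\perp_h}$, of dimension $n-k'=\tfrac{n+k_{GV}}{2}$, satisfies $d(\mathcal{C})\ge d$. Indeed $\mathcal{C}$ is then Hermitian dual-containing, and every nonzero word of $\mathcal{C}$ — in particular every word of $\mathcal{C}\setminus\mathcal{C}^{\perp_h}$ — has weight at least $d$, so Lemma~\ref{construction} yields a pure $[[n,\,2\dim\mathcal{C}-n,\,d]]_q=[[n,k_{GV},d]]_q$ code. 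Hence I only need the existence of such a self-orthogonal $D$.

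First I would recast \eqref{eqGV} geometrically. Its two sides are respectively the number of one-dimensional $F_{q^2}$-subspaces of $F_{q^2}^{\,k'+1}$ and the number of one-dimensional subspaces (``directions'') spanned by a vector of weight between $1$ and $d-1$:
\begin{equation*}
\#\mathbb{P}^{\,k'}(F_{q^2})=\frac{(q^{2})^{k'+1}-1}{q^{2}-1}>\sum_{i=1}^{d-1}\binom{n}{i}(q^{2}-1)^{i-1}=\frac{1}{q^{2}-1}\sum_{i=1}^{d-1}\binom{n}{i}(q^{2}-1)^{i}.
\end{equation*}
Here I use that $q^{\,n-k_{GV}+2}=(q^{2})^{k'+1}$ and that every nonzero vector in a given direction has the same weight, so the right-hand count is exactly the number of weight-$\le d-1$ directions. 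This projective reading on both sides is the signal that a counting argument over directions is the right tool.

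Then I would run the first-moment argument. Choose $D$ uniformly at random among all Hermitian self-orthogonal $[n,k']_{q^2}$ codes and count the bad directions inside $D^{\perp_h}$, namely the lines $\langle w\rangle$ with $1\le wt(w)\le d-1$ and $\langle w\rangle\subseteq D^{\perp_h}$. Since $\langle w\rangle\subseteq D^{\perp_h}$ is equivalent to $D\subseteq H_w:=\{u\in F_{q^2}^{\,n}:\langle u,w\rangle_h=0\}$, linearity gives
\begin{equation*}
\mathbb{E}\bigl[\#\{\text{bad directions in }D^{\perp_h}\}\bigr]=\sum_{\langle w\rangle\ \text{bad}}\Pr\!\bigl[D\subseteq H_w\bigr].
\end{equation*}
The crux is the uniform estimate $\Pr[D\subseteq H_w]\le 1/\#\mathbb{P}^{\,k'}(F_{q^2})$ for every direction $\langle w\rangle$; granting it, the expectation is at most (number of bad directions)$/\#\mathbb{P}^{\,k'}(F_{q^2})$, which is strictly below $1$ precisely when the displayed inequality holds. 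An expected count below $1$ forces some $D$ with no bad direction in its dual, i.e.\ with $d(D^{\perp_h})\ge d$, completing the construction.

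The main obstacle is proving the probability estimate $\Pr[D\subseteq H_w]\le 1/\#\mathbb{P}^{\,k'}(F_{q^2})$ uniformly in $w$. This amounts to counting Hermitian self-orthogonal $k'$-dimensional subspaces contained in the fixed hyperplane $H_w$ against the number of all of them, and the Hermitian isometry group does not act transitively across isotropic and non-isotropic $w$, so the two cases must be treated separately via the Gaussian-type enumeration of totally isotropic subspaces of a Hermitian space. Forcing both counts to respect the single bound $1/\#\mathbb{P}^{\,k'}(F_{q^2})$ — which is where the factor $(q^2)^{k'+1}$, equivalently the $+2$ in the exponent of \eqref{eqGV}, is pinned down — is the quantitative heart of the proof, whereas the two reductions above are routine once this enumeration is in hand. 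The full accounting is carried out in \cite{feng2004finite}.
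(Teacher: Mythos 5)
First, a point of comparison that matters here: the paper contains no proof of this lemma at all --- it is quoted from \cite{feng2004finite} as a known benchmark --- so your proposal must stand on its own. Its two reductions are fine: producing a Hermitian self-orthogonal linear code $D$ of dimension $k'=(n-k_{GV})/2$ with $d(D^{\perp_h})\ge d$ does yield a pure $[[n,k_{GV},d]]_q$ code via Lemma~\ref{construction}, and your projective reading of the two sides of (\ref{eqGV}) is exact. The genuine gap is that the estimate you single out as the crux, $\Pr[D\subseteq H_w]\le 1/\#\mathbb{P}^{k'}(F_{q^2})$ for every direction $\langle w\rangle$, is not merely unproven but false for every parameter set admitted by the lemma. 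Write $Q=q^2$. Since the Hermitian form is nondegenerate, every code $D$ in your ensemble satisfies $\dim D^{\perp_h}=n-k'$, so linearity of expectation gives, for \emph{any} probability distribution on Hermitian self-orthogonal $k'$-dimensional codes,
$$\sum_{\langle w\rangle}\Pr\bigl[D\subseteq H_w\bigr]=\mathbb{E}\Bigl[\#\bigl\{\langle w\rangle:\langle w\rangle\subseteq D^{\perp_h}\bigr\}\Bigr]=\frac{Q^{n-k'}-1}{Q-1},$$
the sum running over all $(Q^n-1)/(Q-1)$ directions of $F_{q^2}^n$. If your uniform bound held, summing it over all directions would force $(Q^{n-k'}-1)(Q^{k'+1}-1)\le(Q^n-1)(Q-1)$; but the left side minus the right side equals $(Q^{k'}-1)(Q^{n-k'}-Q)$, which is strictly positive because the lemma's hypotheses give $k'\ge 1$ and $n-k'\ge 2$. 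The \emph{average} of your probabilities already exceeds the bound you need, so no choice of ensemble can satisfy it uniformly.

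Nor can the argument be saved by demanding the bound only on low-weight directions. By Witt's theorem together with scaling by norms, $\Pr[D\subseteq H_w]$ takes exactly two values, $p_{\mathrm{iso}}$ on isotropic and $p_{\mathrm{non}}$ on non-isotropic directions, and by the averaging above at least one of them exceeds $1/\#\mathbb{P}^{k'}(F_{q^2})$. For every $d\ge 3$ the bad set contains directions of both types: weight-$1$ directions are non-isotropic, while isotropic weight-$2$ directions exist because the norm map onto $F_q$ is surjective. So for $d\ge3$ some bad direction violates your estimate, and the first-moment argument over this ensemble can only establish the lemma with the strictly smaller threshold $1/\max(p_{\mathrm{iso}},p_{\mathrm{non}})\le\frac{Q^n-1}{Q^{n-k'}-1}<\frac{Q^{k'+1}-1}{Q-1}$ in place of $\frac{q^{n-k_{GV}+2}-1}{q^2-1}$. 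In other words, the ``$+2$ in the exponent'' is exactly what a union bound over self-orthogonal codes cannot reach; Feng and Ma's proof is a finer counting argument, not this one. For the same reason, deferring ``the full accounting'' to \cite{feng2004finite} is not a repair but a concession that the proposal does not prove the statement: the accounting you would need contradicts what any correct enumeration of isotropic subspaces yields.
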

	
	It is easy to determine that the quantum GV bound illustrates the existence of a class of quantum codes, and quantum codes satisfying inequality (\ref{eqGV}) exist. By virtue of \cite{feng2004finite}, if $k\ge k_{GV}$ holds for certain n and d, i.e. QECC $[[n,k,d]]_q$  beats this bound, then it is considered to have excellent parameters.

	\section{A method for producing QECCs}\label{sec3}
	In this section, a family of 2-generator quasi-cyclic codes is presented. We derive the algebraic form of its Hermitian dual codes, which yields sufficient conditions for those quasi-cyclic codes to be dual-containing under Hermitian inner product. In addition, we also deduce a lower bound on the minimum Hamming distance of them. As an application, many new QECCs are constructed via the Hermitian construction.

	Let $g(x)=g_{0}+g_{1} x+g_{2} x+\cdots+   g_{n-1} x^{n-1} \in \mathcal{R}$ and $[g(x)]=[g_{0},g_{1},g_{2},\cdots,g_{n-1}]$  represents vectors in  $F_{q^2}^{n}$  determined by the coefficient of $g(x)$ in an ascending order. We also define the following polynomials:
	$$\bar{g}(x)=g_{0}+g_{n-1} x+g_{n-2} x^{2}+ \cdots+g_{1} x^{n-1},$$
	$$g^{q}(x)=g^q_{0}+g^q_{1} x+g^q_{2} x+\cdots+   g^q_{n-1} x^{n-1}.$$
	
	Moreover, let  $ h(x)=(x^{n}-1)/g(x)$, then $g^{\perp}(x)=x^{\operatorname{deg}(h(x))} h\left(\frac{1}{x}\right)$. The Hermitian dual code of $\left \langle g(x) \right \rangle $ is $\left \langle g^{\perp_q}(x) \right \rangle $. 
	
	\begin{definition}\label{def1} Let  $\mathcal{C}_{q^2}(g_1,g_2,t)$  be a quasi-cyclic code over  ${F}_{q^2}$  of length  $2n$  generated by  $\left([t(x)g_{1}(x)], [g_{1}(x)]\right)$  and  $\left([g_{2}(x)], [t(x) g_{2}(x)]\right)$, where $g_{1}(x)$, $g_{2}(x)$  and $t(x)$ are polynomials in $\mathcal{R}$ such that $g_{1}(x) \mid \left(x^{n}-1\right)$, $g_{2}(x) \mid\left(x^{n}-1\right)$. 
	Then a generator matrix $G$ of $\mathcal{C}_{q^2}(g_1,g_2,t)$ will have the following form:
	$$G=\left(\begin{array}{cc}G_{T1}& G_1 \\G_2 & G_{T2}\end{array}\right),$$ where  $G_{1}$  and  $G_{2}$  are generator matrices of cyclic codes  $\langle g_1(x)\rangle$  and  $\langle g_2(x)\rangle$, respectively. $G_{T1}$ and $G_{T2}$ are $(n-deg(g_1(x))) \times n$ and $(n-deg(g_2(x))) \times n$ circulant matrices determined by  $[t(x)g_1(x)]$ and $[t(x)g_2(x)]$, separately.
	\end{definition}
	
	\begin{definition} Let  $\mathcal{C}_{0}$  be a quasi-cyclic code over  ${F}_{q^2}$  of length  $2n$, and its generator polynomial pairs are  $( [-\bar{t}^q(x) g_{1}^{\perp_q}(x)], [g_{1}^{\perp_q}(x)])$  and  $( [g_{2}^{\perp_q}(x)],[- \bar{t}^q(x) g_{2}^{\perp_q}(x)])$. 
	Then a generator matrix $G_0$ of $\mathcal{C}_{0}$ will have the following form:
	$$G_0=\left(\begin{array}{cc}
	G^{\perp}_{\bar{T1}}& G^{\perp}_1 \\
	G^{\perp}_2 & G^{\perp}_{\bar{T2}}
	\end{array}\right),$$
	where  $G^{\perp}_{1}$ and $G^{\perp}_{2}$  are generator matrices of cyclic codes  $\langle g^{\perp_q}_1(x)\rangle$  and $\langle g^{\perp_q}_2(x)\rangle$, respectively. $G^{\perp}_{\bar{T1}}$ and $G^{\perp}_{\bar{T2}}$ are $deg(g_1(x)) \times n$ and $deg(g_2(x)) \times n$ circulant matrices determined by  $[-\bar{t}^q(x) g_{1}^{\perp_q}(x)]$ and $[-\bar{t}^q(x) g_{2}^{\perp_q}(x)]$, separately.
	\end{definition}

	\begin{lemma}(\cite{Bierbrauer2014TheSO}, Lemma 1)\label{Hermitianself-orthogonal} Let $\mathcal{C}$ be a linear code over $F_{q^2}^n$, then $\mathcal{C}$ is self-orthogonal concerning the Hermitian inner product if and only if $\langle c,c \rangle_h=0$ for all codewords $c$ of $\mathcal{C}$.\end{lemma}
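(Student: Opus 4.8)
The forward direction is immediate: if $\mathcal{C}$ is Hermitian self-orthogonal, then $\langle u, v\rangle_h = 0$ for all $u,v \in \mathcal{C}$ by definition, and specializing to $u=v=c$ yields $\langle c,c\rangle_h = 0$ for every codeword. All the content of the lemma lies in the converse, so that is where I would concentrate.

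For the converse the plan is to combine the $F_{q^2}$-linearity of $\mathcal{C}$ with a polarization argument. Fix arbitrary $u,v \in \mathcal{C}$; for any scalar $\lambda \in F_{q^2}$ the vector $u+\lambda v$ again lies in $\mathcal{C}$, so by hypothesis $\langle u+\lambda v,\, u+\lambda v\rangle_h = 0$. First I would expand this using additivity of the Frobenius map $x\mapsto x^{q}$ on $F_{q^2}$, obtaining
\[
\langle u+\lambda v,\, u+\lambda v\rangle_h = \langle u,u\rangle_h + \lambda^{q}\langle u,v\rangle_h + \lambda\langle v,u\rangle_h + \lambda^{q+1}\langle v,v\rangle_h .
\]
Since $\langle u,u\rangle_h = \langle v,v\rangle_h = 0$ by assumption, this collapses to the identity $\lambda^{q}\langle u,v\rangle_h + \lambda\langle v,u\rangle_h = 0$, valid for \emph{every} $\lambda \in F_{q^2}$.

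The next step is to relate the two mixed terms. A short computation using $w^{q^2}=w$ for $w\in F_{q^2}$ shows $\langle v,u\rangle_h = \langle u,v\rangle_h^{\,q}$. Writing $a=\langle u,v\rangle_h$, the identity becomes $a\,\lambda^{q} + a^{q}\lambda = 0$ for all $\lambda\in F_{q^2}$. To conclude $a=0$ I would read this as a polynomial identity: the polynomial $p(\lambda)=a\,\lambda^{q}+a^{q}\lambda$ has degree at most $q$, yet it vanishes at all $q^{2}$ points of $F_{q^2}$. Because $q^{2}>q$, a nonzero polynomial cannot have that many roots, forcing its coefficients to vanish; in particular the leading coefficient $a$ is zero. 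Hence $\langle u,v\rangle_h=0$ for all $u,v\in\mathcal{C}$, which is exactly Hermitian self-orthogonality.

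The main obstacle, and really the only nontrivial point, is the passage from ``$\langle c,c\rangle_h=0$ for all $c$'' to ``$\langle u,v\rangle_h=0$ for all $u,v$''. Over the reals or complexes this is the familiar polarization identity, but here one must be careful on two counts: the form is only conjugate-symmetric in the twisted sense $\langle v,u\rangle_h=\langle u,v\rangle_h^{\,q}$, and the scalars come from $F_{q^2}$ rather than a characteristic-zero field. It is worth noting that the full $F_{q^2}$-linearity is genuinely needed: taking only $\lambda=1$ gives $a+a^{q}=0$, i.e.\ a vanishing trace, which does not by itself force $a=0$. The cleanest finish is the counting argument above; an equivalent route observes that $\lambda^{q-1}$ takes $q+1\ge 3$ distinct values as $\lambda$ ranges over $F_{q^2}^{\ast}$, so the relation $\lambda^{q-1}=-a^{q-1}$ cannot hold for a fixed $a\neq 0$. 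Either route avoids any case distinction on the characteristic.
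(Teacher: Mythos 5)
Your proof is correct. Note that the paper does not actually prove this lemma---it imports it verbatim from \cite{Bierbrauer2014TheSO} (Lemma~1)---so there is no in-paper argument to compare against; your polarization argument is the standard and complete route. Specifically, expanding $\langle u+\lambda v,\,u+\lambda v\rangle_h$ for all $\lambda\in F_{q^2}$, using the twisted symmetry $\langle v,u\rangle_h=\langle u,v\rangle_h^{\,q}$ to reduce to $a\lambda^q+a^q\lambda=0$, and then killing $a$ by root-counting (a nonzero polynomial of degree at most $q$ cannot vanish at all $q^2$ points of $F_{q^2}$) is exactly the right argument; your remark that $\lambda=1$ alone only gives a vanishing trace, so the full scalar range is genuinely needed, correctly identifies the one point where a careless proof would fail.
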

	
	\begin{lemma}(\cite{lv2020quantum}, Proposition 1)\label{exchange_law}
	Let  $f(x)$, $g(x)$  and  $h(x)$  be monic polynomials in  $\mathcal{R}$. Then the following equality of Hermitian inner product of vectors in  $F_{q^{2}}^{n}$  holds:
	$$\langle[f(x) g(x)],[h(x)]\rangle_{h}=\left\langle[g(x)],\left[\overline{f^{q}}(x) h(x)\right]\right\rangle_{h}.$$
	\end{lemma}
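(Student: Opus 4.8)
The plan is to reduce the Hermitian inner product of two coefficient vectors to a single coefficient of a polynomial product in $\mathcal{R}$, and then to exploit the algebraic compatibility of the bar-operation and the coordinatewise $q$-th power with multiplication. The starting observation is that for any $a(x)=\sum_{i} a_i x^i$ and $b(x)=\sum_{i} b_i x^i$ in $\mathcal{R}$ one has $\overline{b^q}(x)=\sum_i b_i^q x^{-i}$ (reading exponents modulo $n$), so the product $a(x)\,\overline{b^q}(x)$, reduced modulo $x^n-1$, has constant term $\sum_i a_i b_i^q=\langle [a(x)],[b(x)]\rangle_h$. Thus I would first record the auxiliary identity that $\langle[a(x)],[b(x)]\rangle_h$ equals the coefficient of $x^0$ in $a(x)\overline{b^q}(x)$; this converts the whole statement into a claim about constant terms of products, where monicity of $f,g,h$ plays no role.

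Next I would verify the two structural properties of the operation $p(x)\mapsto p^\dagger(x):=\overline{p^q}(x)$ on $\mathcal{R}$. First, it is multiplicative: the coordinatewise $q$-th power is a ring endomorphism of $F_{q^2}[x]$ (the $q$-power map is additive and multiplicative on $F_{q^2}$ and hence commutes with the convolution sums), and the substitution $x\mapsto x^{-1}$ is a ring automorphism of $\mathcal{R}$, so their composite satisfies $(pr)^\dagger=p^\dagger r^\dagger$. Second, it is an involution: applying it twice yields $\overline{(\overline{p^q})^q}=\overline{p^{q^2}}=\overline{\bar p}=p$, where $p^{q^2}=p$ coordinatewise because every element of $F_{q^2}$ is fixed by $a\mapsto a^{q^2}$. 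With these in hand, the left side equals the constant term of $f(x)g(x)\,h^\dagger(x)$, while the right side equals the constant term of $g(x)\,(f^\dagger(x)h(x))^\dagger=g(x)\,(f^\dagger)^\dagger(x)\,h^\dagger(x)=g(x)f(x)h^\dagger(x)$; by commutativity of $\mathcal{R}$ these two constant terms coincide, which is exactly the asserted equality.

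As a self-contained alternative I would give the purely computational route: expand $\langle[f(x)g(x)],[h(x)]\rangle_h=\sum_{i+j\equiv k}f_i g_j h_k^q$ directly, expand $\langle[g(x)],[\overline{f^q}(x)h(x)]\rangle_h=\sum_{m}g_m\bigl(\sum_{l-s\equiv m}f_s^q h_l\bigr)^q$, and observe that after distributing the outer $q$-th power the coefficients $f_s^{q^2}$ collapse to $f_s$; reindexing by $i=s$, $j=m$, $k=l$ turns the congruence $l-s\equiv m$ into $i+j\equiv k$ and matches the two sums termwise.

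I expect the only real obstacle to be bookkeeping rather than depth. One must handle the reduction of exponents modulo $n$ in $\overline{b^q}$ consistently, so that extracting the constant term really selects the indices with $i\equiv j$, and one must invoke at exactly the right place both the additivity of the $q$-power over the convolution sums and the identity $a^{q^2}=a$ on $F_{q^2}$ — the latter being precisely what makes $\dagger$ an involution and lets the doubled $q$-power cancel. Everything else is formal manipulation in the commutative ring $\mathcal{R}$.
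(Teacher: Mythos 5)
Your proof is correct, but there is nothing in the paper to compare it against: the paper states this lemma as an imported result, quoted from \cite{lv2020quantum} (Proposition 1), and gives no proof of it. On its own merits, your argument is sound. The first route — identifying $\langle[a(x)],[b(x)]\rangle_h$ with the constant term of $a(x)\overline{b^q}(x)$ in $\mathcal{R}$, and then using that $p\mapsto p^\dagger:=\overline{p^q}$ is a multiplicative involution (it is the composition of the coefficientwise Frobenius, a ring endomorphism since $(\sum_i p_i r_{k-i})^q=\sum_i p_i^q r_{k-i}^q$, with the substitution $x\mapsto x^{-1}$, a ring automorphism of $\mathcal{R}$ because $x$ is a unit there) — is the clean conceptual proof, and it correctly isolates the two facts that do all the work: $a^{q^2}=a$ on $F_{q^2}$ and commutativity of $\mathcal{R}$. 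The direct double-sum verification you sketch as a backup also goes through, with the reindexing $l-s\equiv m \iff i+j\equiv k$ exactly as you say, and your remark that monicity of $f,g,h$ is never used is accurate. The only point to state carefully is that $\overline{b^q}(x)=\sum_i b_i^q x^{-i}$ must be read as an identity in $\mathcal{R}$ (i.e., $x^{-i}=x^{n-i}$), which agrees with the paper's definition $\bar g(x)=g_0+g_{n-1}x+\cdots+g_1x^{n-1}$; with that reading, extracting the constant term of $a(x)\overline{b^q}(x)$ really does select only the diagonal terms, since $0\le i,j\le n-1$ and $j\equiv i \pmod n$ force $i=j$, giving $\sum_i a_i b_i^q$ as claimed.
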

	
	\begin{theorem}\label{inner_product_zero}
	Let  $\mathcal{C}_{f}$ and $\mathcal{C}_{g}$ are linear codes with length $n$, generated by $f(x)$ and $g(x)$, respectively. The sufficient and necessary conditions for $\langle[a(x)f(x)],[b(x)g(x)]\rangle_{h}=0$ to hold for any polynomials $a(x)$ and $b(x)$ in $\mathcal{R}$ are $g^{\perp_q} (x)\mid f(x)$ and $f^{\perp_q} (x)\mid g(x)$.
	\end{theorem}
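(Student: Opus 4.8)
The plan is to read the quantified condition as a statement about the cyclic codes $\mathcal{C}_f=\langle f(x)\rangle$ and $\mathcal{C}_g=\langle g(x)\rangle$ and then reduce it to a single polynomial congruence by means of Lemma~\ref{exchange_law}. As $a$ runs over $\mathcal{R}$ the vector $[a(x)f(x)]$ runs over all of $\mathcal{C}_f$, and likewise $[b(x)g(x)]$ runs over $\mathcal{C}_g$, so the hypothesis ``$\langle[a(x)f(x)],[b(x)g(x)]\rangle_{h}=0$ for all $a,b$'' is precisely the assertion that $\mathcal{C}_f$ and $\mathcal{C}_g$ are mutually Hermitian-orthogonal. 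I would make this algebraic by pushing $f$ across the form: by Lemma~\ref{exchange_law}, $\langle[a(x)f(x)],[b(x)g(x)]\rangle_{h}=\langle[a(x)],[\overline{f^q}(x)b(x)g(x)]\rangle_{h}$.

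First I would dispose of the quantifier on $a$ using the elementary fact that $\langle[a(x)],[c(x)]\rangle_{h}=0$ for every $a\in\mathcal{R}$ forces $[c(x)]=\vec{0}$ — take $a(x)=x^{i}$ to read off that the $i$-th coordinate of $c$ vanishes. Thus for fixed $b$ the condition becomes $\overline{f^q}(x)b(x)g(x)\equiv0\pmod{x^{n}-1}$, and since the case $b=1$ (namely $\overline{f^q}(x)g(x)\equiv0$) clearly implies every other case by multiplication, the entire hypothesis collapses to the single congruence $\overline{f^q}(x)g(x)\equiv0\pmod{x^{n}-1}$.

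Next I would identify this congruence with the stated divisibility. Running the same computation in reverse shows that $\{[w]\in\mathcal{R}:\overline{f^q}(x)w(x)\equiv0\pmod{x^{n}-1}\}$ is exactly the Hermitian dual $\mathcal{C}_f^{\perp_h}$, which by the preliminary discussion equals $\langle f^{\perp_q}(x)\rangle$. Because $f^{\perp_q}(x)\mid x^{n}-1$, membership $[g]\in\langle f^{\perp_q}(x)\rangle$ is equivalent to $f^{\perp_q}(x)\mid g(x)$, giving the first condition. The second follows by symmetry: since $\langle v,u\rangle_{h}=(\langle u,v\rangle_{h})^{q}$, Hermitian orthogonality is a symmetric relation, so interchanging the roles of $(f,a)$ and $(g,b)$ and repeating the argument yields $\overline{g^q}(x)f(x)\equiv0\pmod{x^{n}-1}$ and hence $g^{\perp_q}(x)\mid f(x)$. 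For the converse it suffices to note that either divisibility already forces mutual orthogonality — for instance $f^{\perp_q}\mid g$ gives $\mathcal{C}_g\subseteq\langle f^{\perp_q}(x)\rangle=\mathcal{C}_f^{\perp_h}$ — so the two displayed divisibilities are jointly necessary and sufficient.

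The step I expect to be the main obstacle is the bookkeeping of the conjugations and reversals. Applying the exchange law produces $\overline{f^q}$, the $q$-conjugate of the reversal of $f$, and the crux is to verify that the congruence $\overline{f^q}(x)g(x)\equiv0$ really carves out the \emph{Hermitian} dual ideal $\langle f^{\perp_q}(x)\rangle$ rather than its Euclidean analogue. Matching the bar operation and the Frobenius power against the definition of $f^{\perp_q}$ recalled just before the theorem is the delicate point; once that identification is secured, the remaining arguments are routine ideal-theoretic facts about cyclic codes.
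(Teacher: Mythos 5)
Your proposal is correct and takes essentially the same route as the paper's proof: both read the quantified condition as mutual Hermitian orthogonality of the cyclic codes $\mathcal{C}_f$ and $\mathcal{C}_g$, and both then translate containment in the Hermitian duals $\mathcal{C}_f^{\perp_h}=\langle f^{\perp_q}(x)\rangle$ and $\mathcal{C}_g^{\perp_h}=\langle g^{\perp_q}(x)\rangle$ into divisibility of generator polynomials. The only difference is level of detail: where the paper settles the necessity direction with ``according to the definition of Hermitian dual code,'' you verify that step explicitly via Lemma~\ref{exchange_law}, the coordinate-extraction argument with $a(x)=x^{i}$, and the identification of the annihilator of $\overline{f^{q}}(x)$ with $\langle f^{\perp_q}(x)\rangle$ --- a sound, if longer, execution of the same argument.
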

	\begin{proof}
	It is easy to deduce that if $g^{\perp_q} (x)\mid f(x)$ and $f^{\perp_q} (x)\mid g(x)$, then $\mathcal{C}_{f}\subset\mathcal{C}^{\perp_q}_{g}$ and $\mathcal{C}_{g}\subset\mathcal{C}^{\perp_q}_{f}$, so $\langle[a(x)f(x)],[b(x)g(x)]\rangle_{h}=0$. 	
	
	Similarly, according to the definition of Hermitian dual code, if $\langle[a(x)f(x)],[b(x)g(x)]\rangle_{h}=0$ holds for any polynomials $a(x)$, $b(x)\in \mathcal{R}$, then $\mathcal{C}_{f}\subset\mathcal{C}^{\perp_q}_{g}$ and $\mathcal{C}_{g}\subset\mathcal{C}^{\perp_q}_{f}$, so there is $g^{\perp_q} (x)\mid f(x)$ and $f^{\perp_q} (x)\mid g(x)$.
	Therefore, we have finished the proof.
	\end{proof}
	
	In addition, the condition in Theorem \ref{inner_product_zero} can be simplified as  $g^{\perp_q} (x)\mid f(x)$ for reason that $g^{\perp_q} (x)\mid f(x)$ and $f^{\perp_q} (x)\mid g(x)$ are equivalent.
	
	\begin{proposition}\label{dual_code} If t (x) can be such that parameters of $\mathcal{C}_{q^2}(g_1,g_2,t)$ and $\mathcal{C}_0$ are $\left[2 n, 2 n-\operatorname{deg}\left(g_{1}(x)\right)-\operatorname{deg}\left(g_{2}(x)\right)\right]$ and $\left[2 n, \operatorname{deg}\left(g_{1}(x)\right)+\operatorname{deg}\left(g_{2}(x)\right)\right]$, respectively. Then the Hermitian dual code of $\mathcal{C}_{q^2}(g_1,g_2,t)$ is $\mathcal{C}_0$.
	\end{proposition}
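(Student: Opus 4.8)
The plan is to prove the equality $\mathcal{C}_{q^2}(g_1,g_2,t)^{\perp_h}=\mathcal{C}_0$ by first reducing it to a one-sided containment through a dimension count, and then verifying that containment by a direct orthogonality computation on generators. By hypothesis the dimensions of $\mathcal{C}_{q^2}(g_1,g_2,t)$ and $\mathcal{C}_0$ sum to $2n$, the full length. Since the Hermitian dual of any $[2n,k]_{q^2}$ code has dimension $2n-k$, we obtain $\dim \mathcal{C}_{q^2}(g_1,g_2,t)^{\perp_h}=\operatorname{deg}(g_1(x))+\operatorname{deg}(g_2(x))=\dim \mathcal{C}_0$. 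Hence it suffices to show $\mathcal{C}_0\subseteq \mathcal{C}_{q^2}(g_1,g_2,t)^{\perp_h}$, i.e. that every codeword of $\mathcal{C}_0$ is Hermitian-orthogonal to every codeword of $\mathcal{C}_{q^2}(g_1,g_2,t)$.

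Next I would parametrize the codewords by their generating pairs. A generic codeword of $\mathcal{C}_{q^2}(g_1,g_2,t)$ has the form $u=([a t g_1+b g_2],[a g_1+b t g_2])$, and a generic codeword of $\mathcal{C}_0$ has the form $v=([-c\,\bar t^q g_1^{\perp_q}+d\,g_2^{\perp_q}],[c\,g_1^{\perp_q}-d\,\bar t^q g_2^{\perp_q}])$, where $a,b,c,d\in\mathcal{R}$ are arbitrary. Writing the length-$2n$ Hermitian inner product as the sum of the two length-$n$ block inner products and expanding by additivity, $\langle u,v\rangle_h$ becomes a sum of eight terms, which I would sort into four \emph{diagonal} terms (each pairing a multiple of $g_i$ with a multiple of $g_i^{\perp_q}$) and four \emph{cross} terms (pairing $g_1$ with $g_2^{\perp_q}$, or $g_2$ with $g_1^{\perp_q}$).

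The diagonal terms vanish one at a time: because $\langle g_i(x)\rangle^{\perp_h}=\langle g_i^{\perp_q}(x)\rangle$ (equivalently, via Theorem \ref{inner_product_zero}), any vector $[p\,g_i]\in\langle g_i\rangle$ is Hermitian-orthogonal to any vector $[s\,g_i^{\perp_q}]\in\langle g_i^{\perp_q}\rangle$, so each diagonal term is already zero. The cross terms cancel in pairs: applying the exchange law (Lemma \ref{exchange_law}) to move the factor $t$ out of the first argument turns $\langle [a t g_1],[d g_2^{\perp_q}]\rangle_h$ into $\langle [a g_1],[\,\overline{t^q}\,d g_2^{\perp_q}]\rangle_h$, and since $\overline{t^q}=\bar t^q$ this exactly cancels the companion term $\langle [a g_1],[-d\,\bar t^q g_2^{\perp_q}]\rangle_h$; the remaining cross pair (involving $b,c$ and $g_2,g_1^{\perp_q}$) cancels by the identical manipulation. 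Summing all eight contributions gives $\langle u,v\rangle_h=0$ for all $a,b,c,d$, which yields the containment and, together with the dimension count, the stated equality.

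The main obstacle I anticipate is purely the bookkeeping in the cross-term cancellation: one must apply the exchange law in the correct direction and confirm the notational identity $\overline{t^q}=\bar t^q$ (reversal and coordinatewise $q$-th powering commute), so that the factors $-\bar t^q$ built into the generators of $\mathcal{C}_0$ align precisely with the factors the exchange law produces. Indeed, the whole reason $\mathcal{C}_0$ is defined with those $-\bar t^q$ coefficients is to force exactly this cancellation; once that is recognized, the diagonal part is essentially automatic from the cyclic Hermitian duality $\langle g_i\rangle^{\perp_h}=\langle g_i^{\perp_q}\rangle$, and the argument closes.
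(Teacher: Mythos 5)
Your proposal is correct and follows essentially the same route as the paper's own proof: parametrize generic codewords of both codes, expand the Hermitian inner product into eight terms, kill the diagonal terms via $\langle g_i\rangle^{\perp_h}=\langle g_i^{\perp_q}\rangle$, cancel the cross terms with the exchange law (Lemma \ref{exchange_law}), and conclude by the dimension count. The only difference is cosmetic — you place the dimension argument first, the paper places it last — and your explicit pairwise cancellation of the cross terms just spells out what the paper compresses into ``By Lemma \ref{exchange_law}, this equation is equal to zero.''
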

	
	\begin{proof}
	Set $c_1=([a(x) t(x) g_{1}(x)+b(x) g_{2}(x)],[a(x)g_{1}(x)+b(x) t(x) g_{2}(x)])$,
	$c_2=([-c(x)\bar{t}^q(x) g_{1}^{\perp_q}(x)+d(x)g_{2}^{\perp_q}(x)],[c(x)g_{1}^{\perp_q}(x)-d(x)\bar{t}^q(x)g_{2}^{\perp_q}(x)])$, where $a(x)$, $b(x)$, $c(x)$, $d(x)$ are arbitrary polynomials in $\mathcal{R}$.
	Then, any codewords in $\mathcal{C}_{q^2}(g_1,g_2,t)$ and $\mathcal{C}_0$ can be represented by $c_1$, $c_2$, respectively.
	The Hermitian inner product of them $\langle c_1, c_2\rangle_{h}$ is equal to
	$$ \begin{aligned}
	\langle[a(x)t(x)g_{1}(x)+b(x) g_{2}(x)],
	[-c(x)\bar{t}^q(x) g_{1}^{\perp_q}(x)+d(x)g_{2}^{\perp_q}(x)]\rangle_{h} \\
	+\langle[a(x)g_{1}(x)+b(x) t(x) g_{2}(x)],[c(x)g_{1}^{\perp_q}(x)-d(x)\bar{t}^q(x)g_{2}^{\perp_q}(x)]\rangle_{h} \\
	=\langle[a(x)t(x)g_{1}(x)],[-c(x)\bar{t}^q(x) g_{1}^{\perp_q}(x)]\rangle_{h} 
	+\langle[a(x)t(x)g_{1}(x)],[d(x)g_{2}^{\perp_q}(x)]\rangle_{h}  \\
	+\langle[b(x) g_{2}(x)],[-c(x) \bar{t}^q(x)g_{1}^{\perp_q}(x)]\rangle_{h} 
	+\langle[b(x) g_{2}(x)],[d(x)g_{2}^{\perp_q}(x)]\rangle_{h}\\
	+\langle[a(x)g_{1}(x)],[c(x)g_{1}^{\perp_q}(x)]\rangle_{h} 
	+\langle[a(x)g_{1}(x)],[-d(x)\bar{t}^q(x)g_{2}^{\perp_q}(x)]\rangle_{h} \\
	+\langle[b(x) t(x) g_{2}(x)],[c(x)g_{1}^{\perp_q}(x)]\rangle_{h} 
	+\langle[b(x) t(x) g_{2}(x)],[-d(x)\bar{t}^q(x)g_{2}^{\perp_q}(x)]\rangle_{h}. \\
	\end{aligned}$$
	
	Since $\langle g_i^{\perp_q}(x)\rangle$, $i=1$ or $2$, represents Hermitian dual codes of cyclic codes  $\langle g_i(x)\rangle$, so the above formula can be simplified to
	$$ \begin{aligned}
	=\langle[a(x)t(x)g_{1}(x)],[d(x)g_{2}^{\perp_q}(x)]\rangle_{h}
	+\langle[b(x) g_{2}(x)],[-c(x)\bar{t}^q(x)g_{1}^{\perp_q}(x)]\rangle_{h}\\
	+\langle[a(x)g_{1}(x)],[-d(x)\bar{t}^q(x)g_{2}^{\perp_q}(x)]\rangle_{h}
	+\langle[b(x) t(x) g_{2}(x)],[c(x)g_{1}^{\perp_q}(x)]\rangle_{h}. \\
	\end{aligned}$$
	
	By Lemma \ref{exchange_law}, we have that this equation is equal to zero, so $\mathcal{C}_0$ is contained by $\mathcal{C}^{\perp_q}_{q^2}(g_1,g_2,t)$.
	In addition, it is analogous that $\mathcal{C}^{\perp_q}_{q^2}(g_1,g_2,t)$ and $\mathcal{C}_0$ have the same dimension. 
	Therefore,  $\mathcal{C}_{0}$  is Hermitian dual code of $\mathcal{C}$.
	\end{proof}
	
	In fact, there are numerous $t(x)$ that satisfy the condition in Proposition \ref{dual_code}, so it is a simple matter to construct $\mathcal{C}_{q^2}(g_1,g_2,t)$ and $\mathcal{C}_0$ with parameter $\left[2 n, 2 n-\operatorname{deg}\left(g_{1}(x)\right)-\operatorname{deg}\left(g_{2}(x)\right)\right]$ and 
	$\left[2 n, \operatorname{deg}\left(g_{1}(x)\right)+\operatorname{deg}\left(g_{2}(x)\right)\right]$, respectively, during calculation by Magma \cite{bosma1997magma}.

	\begin{lemma}(\cite{lv2020quantum}, Proposition 2)\label{onegenearator}
	Suppose $f(x),g(x)$ are polynomials in $\mathcal{R}$ and $g(x)\mid x^n-1$. If $g^{\perp_q} (x) \mid g(x)$, then quasi-cyclic code generated by $(g(x),f (x)g(x))$ will be Hermitian self-orthogonal.
	\end{lemma}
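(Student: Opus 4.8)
The plan is to reduce the Hermitian self-orthogonality of this $1$-generator, index-$2$ quasi-cyclic code to the self-orthogonality of the single cyclic code $\langle g(x)\rangle$, and then to read everything off Theorem \ref{inner_product_zero}. First I would record the module description of the code: since $\mathcal{C}$ is generated by the pair $(g(x),\,f(x)g(x))$ as an $\mathcal{R}$-module, every codeword has the form $c_a=([a(x)g(x)],[a(x)f(x)g(x)])$ for some $a(x)\in\mathcal{R}$. The key structural observation is that both length-$n$ blocks of $c_a$ are multiples of $g(x)$ in $\mathcal{R}$: the first block is $a(x)g(x)$ and the second is $\bigl(a(x)f(x)\bigr)g(x)$.

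Next I would split the length-$2n$ Hermitian form into its two length-$n$ blocks. Because $\langle\cdot,\cdot\rangle_h$ is defined coordinatewise, for two arbitrary codewords $c_a$ and $c_b$ with $a(x),b(x)\in\mathcal{R}$ we get
\[
\langle c_a,c_b\rangle_h=\langle[a(x)g(x)],[b(x)g(x)]\rangle_h+\langle[a(x)f(x)g(x)],[b(x)f(x)g(x)]\rangle_h .
\]
Each summand has the shape $\langle[\alpha(x)g(x)],[\beta(x)g(x)]\rangle_h$, with $(\alpha,\beta)=(a,b)$ in the first term and $(\alpha,\beta)=(af,bf)$ in the second; in both cases $\alpha(x)$ and $\beta(x)$ range freely over $\mathcal{R}$. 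Now I apply Theorem \ref{inner_product_zero} in the special case where the two generating polynomials appearing there are both taken to be $g(x)$. Its hypothesis then reduces to the single divisibility $g^{\perp_q}(x)\mid g(x)$ (the two required divisibilities coincide), which is exactly the assumption of the present lemma. Consequently both summands vanish, so $\langle c_a,c_b\rangle_h=0$.

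Finally, since $c_a$ and $c_b$ were arbitrary codewords, this shows $\mathcal{C}\subseteq\mathcal{C}^{\perp_h}$, i.e.\ the quasi-cyclic code generated by $(g(x),f(x)g(x))$ is Hermitian self-orthogonal. (Alternatively, specializing to $b(x)=a(x)$ and invoking Lemma \ref{Hermitianself-orthogonal} would also close the argument, but the direct pairwise computation above needs no extra machinery.) I do not anticipate a real obstacle here: the only points requiring care are the justification that a general codeword has the stated module form and the blockwise decomposition of the length-$2n$ Hermitian form, both of which are immediate from the definitions. The entire force of the proof is carried by Theorem \ref{inner_product_zero}, the effect of the hypothesis $g^{\perp_q}(x)\mid g(x)$ being precisely that the cyclic code $\langle g(x)\rangle$ is itself Hermitian self-orthogonal, which is inherited by each of the two identical blocks.
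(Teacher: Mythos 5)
Your proof is correct. For the record, the paper does not prove this lemma at all---it is imported verbatim from \cite{lv2020quantum} (Proposition 2)---so the fair comparison is with the paper's own later generalization, Theorem \ref{onegenearator_sufficient}, which establishes the exact criterion $g^{\perp_q}(x)\mid(f(x)\bar{f_q}(x)+1)g(x)$ for the same code. That argument works on the diagonal: it takes a single codeword, uses the exchange law (Lemma \ref{exchange_law}) to move $f$ across the inner product and merge the two blocks into $\langle[a(x)(f(x)\bar{f_q}(x)+1)g(x)],[a(x)g(x)]\rangle_h$, applies Theorem \ref{inner_product_zero}, and finally appeals to Lemma \ref{Hermitianself-orthogonal} to pass from $\langle c,c\rangle_h=0$ to self-orthogonality; the present lemma is then the trivial special case, since $g^{\perp_q}(x)\mid g(x)$ forces $g^{\perp_q}(x)\mid(f(x)\bar{f_q}(x)+1)g(x)$. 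Your route is genuinely different and more elementary: you compute all pairwise inner products $\langle c_a,c_b\rangle_h$, split them blockwise, and kill each block by the sufficiency direction of Theorem \ref{inner_product_zero} with both cyclic codes taken equal to $\langle g(x)\rangle$; this needs neither the exchange law nor the diagonal criterion, and it directly exhibits $\mathcal{C}\subseteq\mathcal{C}^{\perp_h}$. The only blemish is your remark that in the second block the multipliers $a(x)f(x)$ and $b(x)f(x)$ ``range freely over $\mathcal{R}$''---they do not (they are constrained to multiples of $f(x)$)---but the slip is harmless, because you only invoke the direction of Theorem \ref{inner_product_zero} asserting vanishing for \emph{every} pair of multipliers, hence in particular for these. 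What you lose relative to the paper's machinery is only sharpness: your hypothesis $g^{\perp_q}(x)\mid g(x)$ is sufficient but, as Theorem \ref{onegenearator_sufficient} shows, not necessary.
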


	\begin{proposition}\label{dua_containing} If $g_{1}(x) \mid g_{1}^{\perp_q}(x)$, $g_{2}(x) \mid g_{2}^{\perp_q}(x)$, $g_{2}(x)\mid (t(x)+\bar{t}^q(x)) g_{1}^{\perp_q}(x)$ and $t(x)$ satisfy Proposition \ref{dual_code}, then $\mathcal{C}_{0}$ and $\mathcal{C}_{q^2}(g_1,g_2,t)$ are Hermitian self-orthogonal and Hermitian dual-containing, respectively.
	\end{proposition}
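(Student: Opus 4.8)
The plan is to observe that the two assertions collapse into a single one. By Proposition \ref{dual_code} we already have $\mathcal{C}_0 = \mathcal{C}_{q^2}(g_1,g_2,t)^{\perp_q}$, and Hermitian duality is an involution, $(\mathcal{C}^{\perp_q})^{\perp_q}=\mathcal{C}$. Hence, once I show that $\mathcal{C}_0$ is Hermitian self-orthogonal, i.e. $\mathcal{C}_0\subseteq\mathcal{C}_0^{\perp_q}$, taking Hermitian duals yields $\mathcal{C}_{q^2}(g_1,g_2,t)=\mathcal{C}_0^{\perp_q}\supseteq\mathcal{C}_0=\mathcal{C}_{q^2}(g_1,g_2,t)^{\perp_q}$, so $\mathcal{C}_{q^2}(g_1,g_2,t)$ is automatically Hermitian dual-containing. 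Thus the whole statement reduces to self-orthogonality of $\mathcal{C}_0$, for which I would invoke Lemma \ref{Hermitianself-orthogonal}: it suffices to check that $\langle c,c\rangle_h=0$ for an arbitrary codeword $c\in\mathcal{C}_0$.

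Writing a generic codeword as $c=([-c(x)\bar{t}^q(x) g_{1}^{\perp_q}(x)+d(x)g_{2}^{\perp_q}(x)],\,[c(x)g_{1}^{\perp_q}(x)-d(x)\bar{t}^q(x)g_{2}^{\perp_q}(x)])$ with arbitrary $c(x),d(x)\in\mathcal{R}$, I would expand $\langle c,c\rangle_h$ as the sum of the self inner products of its two length-$n$ blocks, producing eight terms. The four ``diagonal'' terms should vanish at once: the hypotheses $g_{1}(x)\mid g_{1}^{\perp_q}(x)$ and $g_{2}(x)\mid g_{2}^{\perp_q}(x)$ say precisely that the cyclic codes $\langle g_{1}^{\perp_q}(x)\rangle$ and $\langle g_{2}^{\perp_q}(x)\rangle$ are Hermitian self-orthogonal (their Hermitian duals being $\langle g_{1}(x)\rangle$ and $\langle g_{2}(x)\rangle$; this is also the content of Lemma \ref{onegenearator}), so any term both of whose factors lie in a single one of these codes is zero.

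The heart of the argument is the four remaining ``cross'' terms, each pairing the $g_{1}^{\perp_q}$-part against the $g_{2}^{\perp_q}$-part. Here I would use the exchange law (Lemma \ref{exchange_law}) to migrate the factor $\bar{t}^q(x)$ across the inner product; since the bar operation is an involution and commutes with the $q$-power, so that $\overline{t^q}=\bar{t}^q$, each such transfer trades a $\bar{t}^q(x)$ on one side for a $t(x)$ on the other. Grouping the cross terms in pairs then collapses them into expressions of the shape $\langle[c(x)(t(x)+\bar{t}^q(x))g_{1}^{\perp_q}(x)],[d(x)g_{2}^{\perp_q}(x)]\rangle_h$ together with its reversed-argument companion. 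The remaining hypothesis $g_{2}(x)\mid(t(x)+\bar{t}^q(x))g_{1}^{\perp_q}(x)$ places $(t(x)+\bar{t}^q(x))g_{1}^{\perp_q}(x)$ inside the cyclic code $\langle g_{2}(x)\rangle$, whose Hermitian dual is exactly $\langle g_{2}^{\perp_q}(x)\rangle$; by Theorem \ref{inner_product_zero} these two codes are mutually Hermitian orthogonal for arbitrary multipliers, in either argument order, so each paired cross contribution is zero. Combining this with the vanishing diagonal terms gives $\langle c,c\rangle_h=0$, which establishes self-orthogonality of $\mathcal{C}_0$ and hence, via the duality reduction of the first paragraph, the dual-containing property of $\mathcal{C}_{q^2}(g_1,g_2,t)$.

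I expect the main obstacle to be the bookkeeping in the cross terms: tracking which factor carries $\bar{t}^q(x)$ versus $t(x)$ after each application of Lemma \ref{exchange_law}, and confirming that the transfers really assemble into the symmetric combination $t(x)+\bar{t}^q(x)$ that the divisibility hypothesis is built to exploit. Because $\langle\cdot,\cdot\rangle_h$ is only conjugate-linear in its second argument, the two argument orders are genuinely distinct, so I would take care to justify the vanishing of both orientations, which Theorem \ref{inner_product_zero} handles since it is stated for arbitrary polynomial multipliers.
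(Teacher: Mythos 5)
Your proposal is correct and follows essentially the same route as the paper: reduce everything to Hermitian self-orthogonality of $\mathcal{C}_0$ via Proposition \ref{dual_code}, dispose of the diagonal terms using $g_{1}(x)\mid g_{1}^{\perp_q}(x)$ and $g_{2}(x)\mid g_{2}^{\perp_q}(x)$ (the paper does this by invoking Lemma \ref{onegenearator} for the two 1-generator subcodes, which is what your cyclic-code self-orthogonality argument amounts to), and collapse the cross terms via Lemma \ref{exchange_law} into $\langle[c(x)(t(x)+\bar{t}^q(x))g_{1}^{\perp_q}(x)],[d(x)g_{2}^{\perp_q}(x)]\rangle_h$, which vanishes by Theorem \ref{inner_product_zero} together with the hypothesis $g_{2}(x)\mid(t(x)+\bar{t}^q(x))g_{1}^{\perp_q}(x)$. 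The only cosmetic difference is that the paper decomposes a generic codeword of $\mathcal{C}_0$ into codewords of the two 1-generator quasi-cyclic codes and checks a single cross inner product, whereas you expand all eight terms at once and verify both argument orders explicitly.
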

	\begin{proof}
	According to Lemma \ref{onegenearator}, if $g_{1}(x) \mid g_{1}^{\perp_q}(x)$, $g_{2}(x) \mid g_{2}^{\perp_q}(x)$, one can deduce that 1-generator quasi-cyclic codes generated by $( [-\bar{t}^q(x) g_{1}^{\perp_q}(x)], [g_{1}^{\perp_q}(x)])$  and  $( [g_{2}^{\perp_q}(x)],[- \bar{t}^q(x) g_{2}^{\perp_q}(x)])$ are both Hermitian self-orthogonal. 
	Let $c_1=( [-c(x)\bar{t}^q(x) g_{1}^{\perp_q}(x)], [c(x)g_{1}^{\perp_q}(x)])$  and  
	$c_2 =( [d(x)g_{2}^{\perp_q}(x)],[- d(x)\bar{t}^q(x) g_{2}^{\perp_q}(x)])$, $c(x)$, $d(x)$ $\in \mathcal{R}$, denote codewords generated by them, respectively. 
	Then, the Hermitian inner product of $c_1$ and $c_2$ can be expressed as follows.
	$$
	\begin{array}{l} 
	\langle  c_1,c_2 \rangle_h =\langle  [-c(x)\bar{t}^q(x) g_{1}^{\perp_q}(x)],[d(x)g_{2}^{\perp_q}(x)] \rangle_h+\langle  [c(x)g_{1}^{\perp_q}(x)],[- d(x)\bar{t}^q(x) g_{2}^{\perp_q}(x)] \rangle_h 	\\
	=\langle  [-c(x)\bar{t}^q(x) g_{1}^{\perp_q}(x)],[d(x)g_{2}^{\perp_q}(x)] \rangle_h+\langle  [c(x)t(x)g_{1}^{\perp_q}(x)],[- d(x) g_{2}^{\perp_q}(x)] \rangle_h 	\\
	=-\langle  [c(x)(t(x)+\bar{t}^q(x)) g_{1}^{\perp_q}(x)],[d(x)g_{2}^{\perp_q}(x)] \rangle_h.
	\end{array} 
	$$
	
	By Theorem \ref{inner_product_zero}, one can deduce that if $g_{2}(x)\mid (t(x)+\bar{t}^q(x)) g_{1}^{\perp_q}(x)$, then above equation is equal to 0, that is, $\mathcal{C}_{0}$ is Hermitian self-orthogonal. 
	Moreover, since $t(x)$ satisfies Proposition \ref{dual_code}, so $\mathcal{C}_{q^2}(g_1,g_2,t)$ is Hermitian dual-containing.
	\end{proof}

	
	\begin{theorem}\label{method} Let  $\mathcal{C}_{q^2}(g_1,g_2,t)$  be a quasi-cyclic code proposed in Definition  \ref{def1}.  If  $g_{1}(x)$, $g_{2}(x)$ and $t(x)$ satisfies Proposition \ref{dual_code} and \ref{dua_containing}, then there exists a pure  $\left[\left[2n, 2n-2\operatorname{deg}\left(g_{1}(x)\right)-2\operatorname{deg}\left(g_{2}(x)\right), d\right]\right]_{q}$  QECC, where  $d=\min \left\{wt(\vec{c}) \mid \vec{c} \in \mathcal{C}_{q^2}(g_1,g_2,t)\right\}$.\end{theorem}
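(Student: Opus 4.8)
The plan is to assemble the statement directly from the two propositions proved just above together with the Hermitian construction of Lemma~\ref{construction}; no new machinery is needed. First I would invoke Proposition~\ref{dual_code} to fix the dimension: under the stated hypothesis on $t(x)$, the code $\mathcal{C}_{q^2}(g_1,g_2,t)$ has parameters $[2n,k]_{q^2}$ with $k=2n-\operatorname{deg}(g_1(x))-\operatorname{deg}(g_2(x))$, and its Hermitian dual is exactly $\mathcal{C}_0$. Then I would invoke Proposition~\ref{dua_containing}, whose hypotheses ($g_1(x)\mid g_1^{\perp_q}(x)$, $g_2(x)\mid g_2^{\perp_q}(x)$, and $g_2(x)\mid(t(x)+\bar{t}^q(x))g_1^{\perp_q}(x)$) are precisely what is assumed here, to conclude that $\mathcal{C}_0\subseteq\mathcal{C}_{q^2}(g_1,g_2,t)$; that is, $\mathcal{C}_{q^2}(g_1,g_2,t)$ is Hermitian dual-containing, which is the single structural input required by Lemma~\ref{construction}.

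Next I would check the distance and purity condition of Lemma~\ref{construction}. Setting $d=\min\{wt(\vec{c})\mid\vec{c}\in\mathcal{C}_{q^2}(g_1,g_2,t)\}$, every nonzero codeword of $\mathcal{C}_{q^2}(g_1,g_2,t)$ has weight at least $d$; in particular the set difference $\mathcal{C}_{q^2}(g_1,g_2,t)\setminus\mathcal{C}_0$, being contained in $\mathcal{C}_{q^2}(g_1,g_2,t)$, contains no vector of weight less than $d$. This is exactly the nonexistence hypothesis of Lemma~\ref{construction}, and taking $d$ to be the minimum weight of the whole code (rather than merely of the complementary coset) is what yields a \emph{pure} code.

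Finally I would apply Lemma~\ref{construction} with $n\mapsto 2n$ and $k=2n-\operatorname{deg}(g_1(x))-\operatorname{deg}(g_2(x))$, producing a pure QECC of length $2n$, minimum distance $d$, and dimension $2k-2n=2\bigl(2n-\operatorname{deg}(g_1(x))-\operatorname{deg}(g_2(x))\bigr)-2n=2n-2\operatorname{deg}(g_1(x))-2\operatorname{deg}(g_2(x))$, which is the claimed $\left[\left[2n,\,2n-2\operatorname{deg}(g_1(x))-2\operatorname{deg}(g_2(x)),\,d\right]\right]_q$.

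I do not anticipate a genuine obstacle, since the theorem is essentially a bookkeeping corollary of the preceding results. The only points needing care are getting the containment direction right---it is $\mathcal{C}_0\subseteq\mathcal{C}_{q^2}(g_1,g_2,t)$, so that $\mathcal{C}_{q^2}(g_1,g_2,t)$ plays the role of the outer code $\mathcal{C}$ in Lemma~\ref{construction} while $\mathcal{C}_0$ plays the role of $\mathcal{C}^{\perp_h}$---and substituting the dimension from Proposition~\ref{dual_code} consistently into the formula $2k-n$ to recover the stated parameters.
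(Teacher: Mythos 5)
Your proof is correct and takes exactly the route the paper intends: Theorem~\ref{method} is stated without a printed proof precisely because it is the assembly you describe---Proposition~\ref{dual_code} for the dimension and identification of the dual as $\mathcal{C}_0$, Proposition~\ref{dua_containing} for the containment $\mathcal{C}_0 \subseteq \mathcal{C}_{q^2}(g_1,g_2,t)$, and Lemma~\ref{construction} for the final parameters---which is also how the paper explicitly proves the analogous Theorem~\ref{extend_condition}. Your observation that taking $d$ as the minimum weight of the whole code makes the hypothesis on $\mathcal{C}_{q^2}(g_1,g_2,t)\setminus\mathcal{C}_0$ hold trivially and yields purity matches the paper's convention as well.
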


	\begin{remark}
	Let  $T_{1}$ and  $T_{2}$  be defining sets of cyclic  $\operatorname{codes}\left\langle g_{1}(x)\right\rangle$  and  $\left\langle g_{2}(x)\right\rangle$, respectively. If $T_{1} \cap T^{-q}_{1}=\emptyset$ and $T_{2} \cap T^{-q}_{2}=\emptyset$, then  $g_{1}(x) \mid g_{1}^{\perp_q}(x)$  and $g_{2}(x) \mid g_{2}^{\perp_q}(x)$.
	\end{remark}
	
	Let $\omega$ be a primitive element of $F_4$. For simplicity, elements $0$, $1$, $\omega$, $\omega^2$ in $F_4$ are represented by $0$, $1$, $2$, $3$, respectively.
	To save space, we express coefficient polynomials in ascending order and use indexes of elements to express the same number of consecutive elements. For example, polynomial  
	$1+x^3+x^5$ can be written as $10^2101$.
	
	\begin{example}
	Let  $q^2=4$  and  $n=41$. Consider the $4$-cyclotomic cosets modulo $41$. Select  $T_{1}=C_{1} $ and  $T_{2}=C_{3}$  as the defining sets of cyclic codes  $\left\langle g_{1}(x)\right\rangle$  and  $\left\langle g_{2}(x)\right\rangle$. Then  $g_1(x)=10320102301$, $g_2(x)=12^{3}1312^{3}1$. We choose  $t(x)=10203^{5}2130^{2}2^{2}3^{2}102^{2}3010^{2}1313^{2}2031^{2}3032$.  This will generate a $[82, 62, 9]_4$ code, one can verify which is a dual-containing code with respect to Hermitian inner product, whose weight distribution can be written as  $w(z)=1+5166 z^{9}+119310 z^{10}+2263323 z^{11}+\cdots+ 1209882125048724140018184234 z^{82}$.
	Then a new binary QECC with parameters  $[[ 82, 42, 9 ]]_{2}$  can be provided. Observe that a code with parameter  $[[82, 42, 8]]_{2}$  is the best-known binary QECC with length $82$ and dimension $42$ in \cite{Grassl:codetables}, so the current record of corresponding minimum distance can be improved to $9$. 
	\end{example}
	
	We also obtain a new binary QECCs with parameters $[[70,48,6]]_{2}$ from Theorem \ref{method}, which is better than  corresponding QECC  $[[70,48,5]]_{2}$ that appeared in Grassl's code tables \cite{Grassl:codetables}. Select $g_1(x)=1^{2}3120^{3}21$,
	$g_2(x)=131$, and
	$t(x)=101212010312^{3}30$ $2013021310^{2}20303^{3}$. A Hermitian dual-containing $[70,59,6]_4$ code can be obtained, whose weight distribution is $w(z)=1+ 21945 z^{6}+638400 z^{7}+14751240 z^{8}+\cdots+ 596798778770743310728680969 z^{70}$.\par
	
	Moreover, Theorem \ref{method} is a rigorous method for constructing QECCs but is not a necessity. For $g_1(x)$ and $g_2(x)$ that do not satisfy the condition Proposition \ref{dua_containing}, QECCs with good parameters can also be constructed by searching for suitable $t(x)$. We identify conditions that $t(x)$ should satisfy, which help to expand the choice of $g_1(x)$ and $g_2(x)$ and thus make it more likely to construct new quantum codes.

	\begin{theorem}\label{onegenearator_sufficient}
	Suppose $\mathcal{C}$ is a quasi-cyclic code generated by $([f(x)g(x)],[g(x)])$, $f(x),g(x) \in \mathcal{R}$, then the sufficient and necessary condition for $\mathcal{C}$ to be Hermitian self-orthogonal is $g^{\perp_q} (x) \mid (f(x)\bar{f_q}(x)+1)g(x)$.
	\end{theorem}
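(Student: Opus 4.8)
The plan is to characterize Hermitian self-orthogonality directly through the pairwise inner product of generic codewords and then invoke Theorem~\ref{inner_product_zero}. Since $\mathcal{C}$ is generated by the single pair $([f(x)g(x)],[g(x)])$, every codeword has the form $c_a=([a(x)f(x)g(x)],[a(x)g(x)])$ for some $a(x)\in\mathcal{R}$, so two arbitrary codewords are $c_a$ and $c_b$. Self-orthogonality $\mathcal{C}\subseteq\mathcal{C}^{\perp_h}$ is precisely the statement that $\langle c_a,c_b\rangle_h=0$ for all $a(x),b(x)\in\mathcal{R}$ (equivalently, by Lemma~\ref{Hermitianself-orthogonal}, that $\langle c_a,c_a\rangle_h=0$ for every $a$, but the pairwise formulation is the more convenient one for applying Theorem~\ref{inner_product_zero}).

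First I would split the length-$2n$ inner product over its two length-$n$ blocks:
$$\langle c_a,c_b\rangle_h=\langle[afg],[bfg]\rangle_h+\langle[ag],[bg]\rangle_h.$$
The key manipulation is to collapse this into a single block inner product using the exchange law of Lemma~\ref{exchange_law}. Writing the first argument of the first term as $[f\cdot(ag)]$ and applying Lemma~\ref{exchange_law} moves the factor $f$ across the inner product at the cost of $\overline{f^q}$, giving $\langle[afg],[bfg]\rangle_h=\langle[ag],[\,f\overline{f^q}\,bg\,]\rangle_h$. Adding the second (untouched) term and using additivity of the Hermitian form in its second argument (valid since $(u+v)^q=u^q+v^q$ in characteristic $p$) yields
$$\langle c_a,c_b\rangle_h=\langle[ag],[(f\overline{f^q}+1)\,bg]\rangle_h.$$

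With the inner product in this shape, the conclusion is immediate from Theorem~\ref{inner_product_zero}: reading $g(x)$ as the generator in the first slot and $(f(x)\overline{f^q}(x)+1)g(x)$ as the generator in the second slot, the vanishing of $\langle[a\,g],[b\,(f\overline{f^q}+1)g]\rangle_h$ for all $a,b$ is equivalent to $g^{\perp_q}(x)\mid(f(x)\overline{f^q}(x)+1)g(x)$, the companion divisibility being equivalent to it by the remark following Theorem~\ref{inner_product_zero}. This is exactly the asserted necessary and sufficient condition, upon matching the notation $\bar{f_q}$ with $\overline{f^q}$.

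I expect the main obstacle to be bookkeeping the exchange law cleanly: one must apply Lemma~\ref{exchange_law} in the correct direction and check that the bar/$q$-power operations compose as $[\overline{f^q}\cdot bfg]=[f\overline{f^q}bg]$, so that both block terms share the common left factor $[ag]$ and can be merged. A secondary point to verify is that Theorem~\ref{inner_product_zero} applies even though $(f\overline{f^q}+1)g$ need not divide $x^n-1$: the cyclic code it generates in $\mathcal{R}$ coincides with $\langle\gcd\!\left((f\overline{f^q}+1)g,\,x^n-1\right)\rangle$, and the divisibility is to be read in this ideal-theoretic sense, which is precisely the setting of Theorem~\ref{inner_product_zero}.
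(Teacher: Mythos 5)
Your proposal is correct and takes essentially the same route as the paper's proof: split the length-$2n$ Hermitian inner product over the two blocks, use Lemma~\ref{exchange_law} to collapse it into a single-block product carrying the factor $f(x)\overline{f^q}(x)+1$, and conclude by Theorem~\ref{inner_product_zero}. The only cosmetic differences are that you work with pairwise products $\langle c_a,c_b\rangle_h$ (which in fact matches the all-pairs hypothesis of Theorem~\ref{inner_product_zero} slightly more directly), while the paper computes the diagonal $\langle c,c\rangle_h$ and invokes Lemma~\ref{Hermitianself-orthogonal}, and that you shift the factor $f$ into the second slot rather than the first.
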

	\begin{proof}
	Let $c=([a(x)f(x)g(x)],[a(x)g(x)])$, $a(x)\in \mathcal{R}$. Then any codeword of $\mathcal{C}$ can be represented by $c$. The Hermitian inner product of $c$ can be written as $$\begin{array}{l} \langle c,c \rangle_h 
		= \langle [a(x)f(x)g(x)],[a(x)f(x)g(x)] \rangle_h+\langle[a(x)g(x)],[a(x)g(x)] \rangle_h\\
		= \langle [a(x)f(x)\bar{f_q}(x)g(x)],[a(x)g(x)] \rangle_h+\langle[a(x)g(x)],[a(x)g(x)] \rangle_h\\
		=\langle [a(x)(f(x)\bar{f_q}(x)+1)g(x)],[a(x)g(x)] \rangle_h \end{array}$$
	
	With reference to Theorem \ref{inner_product_zero}, the sufficient and necessary condition for above equation to be zero is $g^{\perp_q} (x) \mid (f(x)\bar{f_q}(x)+1)g(x)$. In addition, by Lemma \ref{Hermitianself-orthogonal}, it is easy to deduce that the sufficient and necessary condition for $\mathcal{C}$ to be Hermitian self-orthogonal is $g^{\perp_q} (x) \mid (f(x)\bar{f_q}(x)+1)g(x)$.
	\end{proof}

	\begin{theorem}\label{extend_condition}
	Let  $\mathcal{C}_{q^2}(g_1,g_2,t)$  be a quasi-cyclic code proposed in Definition  \ref{def1}.  
	If $g_{2}(x)\mid (t(x)+\bar{t}^q(x)) g_{1}^{\perp_q}(x)$, $g_i(x) \mid (t(x)\bar{t}^q(x)+1)g^{\perp_q}_i(x)$, $i=1$ or $2$, and $t(x)$ satisfies Proposition \ref{dual_code}, then there exists a pure  $\left[\left[2n, 2n-2\operatorname{deg}\left(g_{1}(x)\right)-2\operatorname{deg}\left(g_{2}(x)\right), d\right]\right]_{q}$  QECC, where  $d=\min \left\{wt(\vec{c}) \mid \vec{c} \in \mathcal{C}_{q^2}(g_1,g_2,t)\right\}$.
	\end{theorem}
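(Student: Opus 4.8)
The plan is to recognize this theorem as a relaxation of Theorem \ref{method}, in which the hypotheses $g_i(x)\mid g_i^{\perp_q}(x)$ used through Proposition \ref{dua_containing} are replaced by the weaker divisibility conditions $g_i(x)\mid (t(x)\bar{t}^q(x)+1)g_i^{\perp_q}(x)$ supplied by the sharper self-orthogonality criterion of Theorem \ref{onegenearator_sufficient}. Since $t(x)$ satisfies Proposition \ref{dual_code}, the Hermitian dual of $\mathcal{C}_{q^2}(g_1,g_2,t)$ is exactly $\mathcal{C}_0$; hence it suffices to prove that $\mathcal{C}_0$ is Hermitian self-orthogonal, for then $\mathcal{C}_0\subseteq\mathcal{C}_0^{\perp_q}=\mathcal{C}_{q^2}(g_1,g_2,t)$, i.e. $\mathcal{C}_{q^2}(g_1,g_2,t)$ is Hermitian dual-containing.

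By Lemma \ref{Hermitianself-orthogonal} and linearity, self-orthogonality of $\mathcal{C}_0$ reduces to three facts: that each of the two generating $1$-generator quasi-cyclic codes is self-orthogonal, and that they are mutually orthogonal. First I would treat the generator $([-\bar{t}^q(x)g_1^{\perp_q}(x)],[g_1^{\perp_q}(x)])$: applying Theorem \ref{onegenearator_sufficient} with $g(x)=g_1^{\perp_q}(x)$ and $f(x)=-\bar{t}^q(x)$, the self-orthogonality condition reads $(g_1^{\perp_q})^{\perp_q}(x)\mid (f(x)\bar{f_q}(x)+1)g_1^{\perp_q}(x)$. Using $(g_1^{\perp_q})^{\perp_q}=g_1$ together with a short coefficient computation showing $f(x)\bar{f_q}(x)=t(x)\bar{t}^q(x)$, this is precisely the hypothesis $g_1(x)\mid (t(x)\bar{t}^q(x)+1)g_1^{\perp_q}(x)$. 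The generator $([g_2^{\perp_q}(x)],[-\bar{t}^q(x)g_2^{\perp_q}(x)])$ is the same up to interchanging the two length-$n$ blocks, an operation that preserves Hermitian self-orthogonality, so the identical argument yields the condition for $i=2$.

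For the cross term I would reuse the computation in the proof of Proposition \ref{dua_containing} verbatim: writing generic codewords of the two generators as $c_1=([-c(x)\bar{t}^q(x)g_1^{\perp_q}(x)],[c(x)g_1^{\perp_q}(x)])$ and $c_2=([d(x)g_2^{\perp_q}(x)],[-d(x)\bar{t}^q(x)g_2^{\perp_q}(x)])$, the identity $\langle c_1,c_2\rangle_h=-\langle[c(x)(t(x)+\bar{t}^q(x))g_1^{\perp_q}(x)],[d(x)g_2^{\perp_q}(x)]\rangle_h$ follows from Lemma \ref{exchange_law}, and Theorem \ref{inner_product_zero} forces it to vanish exactly when $g_2(x)\mid (t(x)+\bar{t}^q(x))g_1^{\perp_q}(x)$, the remaining hypothesis. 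Combining the three facts gives $\langle c,c\rangle_h=0$ for every $c\in\mathcal{C}_0$, so $\mathcal{C}_0$ is Hermitian self-orthogonal and $\mathcal{C}_{q^2}(g_1,g_2,t)$ is dual-containing with parameters $[2n,\,2n-\deg(g_1(x))-\deg(g_2(x))]_{q^2}$ by Proposition \ref{dual_code}.

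Finally I would invoke the Hermitian construction (Lemma \ref{construction}) with $k=2n-\deg(g_1(x))-\deg(g_2(x))$, producing a pure $[[2n,\,2k-2n,\,d]]_q=[[2n,\,2n-2\deg(g_1(x))-2\deg(g_2(x)),\,d]]_q$ QECC; taking $d$ to be the minimum weight of the whole code $\mathcal{C}_{q^2}(g_1,g_2,t)$ certifies purity, since it lower-bounds the minimum weight on $\mathcal{C}_{q^2}(g_1,g_2,t)\setminus\mathcal{C}_{q^2}^{\perp_q}(g_1,g_2,t)$. The step I expect to be the main obstacle is the bookkeeping behind $f(x)\bar{f_q}(x)=t(x)\bar{t}^q(x)$: one must carefully commute the bar and $q$-power operations, use $t(x)^{q^2}=t(x)$ (the coefficients lie in $F_{q^2}$) and $\bar{\bar{t}}(x)=t(x)$, and confirm that the block interchange genuinely leaves the criterion of Theorem \ref{onegenearator_sufficient} intact.
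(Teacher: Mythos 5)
Your proof is correct and follows essentially the same route as the paper: the paper's own (much terser) proof simply cites Proposition \ref{dua_containing} for the cross-term condition $g_{2}(x)\mid (t(x)+\bar{t}^q(x)) g_{1}^{\perp_q}(x)$ and Theorem \ref{onegenearator_sufficient} for the self-orthogonality of each $1$-generator piece under $g_i(x) \mid (t(x)\bar{t}^q(x)+1)g^{\perp_q}_i(x)$, concludes that $\mathcal{C}_0$ is Hermitian self-orthogonal, and then invokes Proposition \ref{dual_code} and Lemma \ref{construction} exactly as you do. Your write-up only makes explicit the bookkeeping the paper leaves implicit, such as the verification that $f(x)\bar{f_q}(x)=t(x)\bar{t}^q(x)$ when $f(x)=-\bar{t}^q(x)$ and that swapping the two length-$n$ blocks preserves Hermitian self-orthogonality.
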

	
	\begin{proof}
	By Proposition \ref{dua_containing} and Theorem \ref{onegenearator_sufficient}, we can deduce that $\mathcal{C}_0$ is Hermitian self-orthogonal. In addition, $t(x)$ satisfies Proposition \ref{dual_code}, so $\mathcal{C}_0$ is a $\left[2 n, \operatorname{deg}\left(g_{1}(x)\right)+\operatorname{deg}\left(g_{2}(x)\right)\right]_q$ Hermitian self-orthogonal code. By Lemma \ref{construction}, there exists a pure  $\left[\left[2n, 2n-2\operatorname{deg}\left(g_{1}(x)\right)-2\operatorname{deg}\left(g_{2}(x)\right), d\right]\right]_{q}$  QECC, where  $d=\min \left\{wt(\vec{c}) \mid \vec{c} \in \mathcal{C}_{q^2}(g_1,g_2,t)\right\}$. 
	So we complete the proof.
	\end{proof}
	
	Here we give Examples \ref{notsat}, which satisfy Theorem \ref{extend_condition}, but not Theorem \ref{method}.
	
	\begin{example}\label{notsat}
	Let  $q^2=4$  and  $n=35$. Consider the $4$-cyclotomic cosets modulo $35$. Select  $T_{1}=C_{0} \cup C_{1}$ and  $T_{2}=C_{5}\cup C_{7}$  as the defining sets of cyclic codes  $\left\langle g_{1}(x)\right\rangle$  and  $\left\langle g_{2}(x)\right\rangle$. Then  $g_1(x)=1^{2}3023^{2}1$,
	$g_2(x)=12^{2}031$. We choose  $t(x)=032^{2}0^{3}2^{4}012^{2}31^{2}03^{2}21^{2}20203^{2}1032$.  
	It is easy to check that $g^{\perp_q}_1(x)\mid g_1(x)$, $g_{2}(x)\mid (t(x)+\bar{t}^q(x)) g_{1}^{\perp_q}(x)$, and $g_i(x) \mid (t(x)\bar{t}^q(x)+1)g^{\perp_q}_i(x)$, $i=1$ or $2$. In addition, one can calculate that parameters of $\mathcal{C}_{g_1,g_2,t}$ and $\mathcal{C}_0$ are satisfy Proposition \ref{dual_code}.
	Therefore, this will generate a Hermitian dual-containing $[70, 56, 7]_4$ code, whose weight distribution can be written as  $w(z)=1+10605 z^{7}+242025 z^{8}+4743585 z^{9}+\cdots+ 9324980918306331190370055 z^{70}$.
	Then a new binary QECC with parameters  $[[ 70, 42, 7 ]]_{2}$  can be provided. Observe that a code with parameter  $[[70,42,6]]_{2}$  is the best-known binary QECC with length $70$ and dimension $42$ in \cite{Grassl:codetables}, so the current record of corresponding minimum distance can be improved to $7$. 
	\end{example}

	\begin{remark} Two new binary QECCs with parameters $[[ 42, 14, 8 ]]_{2}$, $[[170,148,5]]_{2}$ derived from Theorem \ref{extend_condition} are also given here, both of which are better than the corresponding QECCs that appeared in Grassl's code tables \cite{Grassl:codetables}. Their generators and weight distributions are as follows.\par
	(1)	 $[[42, 14, 8]]_{2}$: $g_1(x)=1^{3}01^{2}0^{2}1^{2}$, $g_2(x)=23231^{2}$, and
	$t(x)=2010^{2}$ $10201^{2}210120^{2}1^{2}3$. A Hermitian dual-containing $[42,28,8]_4$ code can be obtained, whose weight distribution is $w(z)=1+ 4662 z^{8}+40152 z^{9}+345933 z^{10}+\cdots+407672056605 z^{42}$.\par
	(2)	 $[[ 170,148,5 ]]_{2}$: $g_1(x)=1^{3}0^{3}12^{2}31$,
	$g_2(x)=1^{2}$, and
	$t(x)=20123^{2}01$ $3^{5}020^{2}1^{2}0^{2}102303^{2}02^{2}1012102^{2}3^{4}131^{2}0^{3}323123^{2}212131013012^{2}0^{2}1^{2}3^{2}230120302$. A Hermitian dual-containing $[170, 159, 5] _4$ code can be obtained, whose weight distribution is  $w(z)=1+66045 z^{5}+5326950 z^{6}+373937100 z^{7}+\cdots+ 
	3075766749787575402851$ $57212220447157605514941639416619352831833745300990444 z^{170}$.	 	\\
	\end{remark}
	
	Furthermore, by the propagation rules of QECCs in Lemma \ref{propagation_rule}, we can get another 11 new binary QECCs from QECCs above. As shown in Table \ref{tab: binary}, their parameters also improve the lower bounds on the minimum distance in Grassl's table \cite{Grassl:codetables}.
	
	\begin{table}[h]
	\caption{New binary quantum codes from propagation rules}
	\label{tab: binary}
	\begin{center}

		\begin{tabular}{ccc}
			\hline
			NO. &    Our Codes     & Codes in \cite{Grassl:codetables} \\ \hline
	 1  & $[[42,13,8]]_2$  &      $  [[42,13,7]]_2$      \\
	 2  & $[[71,48,6]]_2$  &       $[[71,48,5]]_2$       \\
	 3  & $[[70,39,7]]_{2}$ &      $[[70,39,6]]_2$       \\
	 4  & $[[70,40,7]]_{2}$ &      $[[70,40,6]]_2$       \\ 
	 5  & $[[70,41,7]]_{2}$ &      $[[70,41,6]]_2$       \\
	 6  & $[[71,40,7]]_{2}$ &      $[[71,40,6]]_2$       \\ 
	 7  & $[[71,41,7]]_{2}$ &      $[[71,41,6]]_2$       \\
	 8  & $[[71,42,7]]_{2}$ &      $[[71,42,6]]_2$       \\ 
	 9  & $[[83, 42, 9]]_{2}$ &      $[[83,42,8]]_2$       \\
	 10  & $[[170,147,5]]_2$ &      $[[170,147,4]]_2$       \\
	 11  & $[[171,148,5]]_2$ &      $[[171,148,4]]_2$       \\
	\hline
		\end{tabular}\end{center}
	\end{table}
	
	To demonstrate the effectiveness of our approach, we also constructed many QECCs over $F_q$, where $q$ is $3,4,5$. 
	In particular, they all beat the quantum GV bound and are new or have better parameters than those in the literature. Let  $\gamma$, $\xi$, and  $\zeta$  be the primitives over  $F_{9}$, $F_{16}$, and  $F_{25}$, respectively. Elements  $0$, $1$, $\gamma$, $\cdots$, $\gamma ^{7}$ in $F_{9}$; $0$, $1$, $\xi$, $\cdots$, $\xi^{14}$ in  $F_{16}$  and  $0$, $1$, $\zeta$, $\cdots$, $\zeta^{23}$ in  $F_{25}$  are represented by  $0$, $1$, $\cdots$, $8$; $0$, $1$, $\cdots$, $9$, $A$, $\cdots$, $F$  and  $0$, $1$, $\cdots$, $9$, $A$, $\cdots$, $O$, respectively. In Tables \ref{codesF9}, \ref{codesF16} and \ref{codesF25} are the quasi-cyclic codes over $F_9$, $F_{16}$ and $F_{25}$, which are dual-containing under the Hermitian inner product. These codes are used to construct QECCs in Tables \ref{QECC_F3}, \ref{QECC_F4}, and \ref{QECC_F5}.

	\begin{table}[h]
	\begin{center}
	\caption{Dual-containing quasi-cyclic codes $\mathcal{C}_9(g_1,g_2,t)$}\label{codesF9}%
	\begin{tabular}{@{}lll@{}}
	\toprule
	$n$      & $g_1(x),g_2(x),t(x)$                                                       & $\mathcal{C}_9(g_1,g_2,t)$ \\ \midrule
	10       & $121$, $51$, $41781$                                                       & $[10,7,4]_9$     \\
	16       & $3241$, $721$, $07^{2}3568$                                                & $[16,11,5]_9$    \\
	20       & $587341$, $13031$, $20^{4}21^{4}$                                          & $[20,11,7]_9$    \\
	22       & $50151^{2}$, $5^{2}1501$, $35^{5}1^{5}$                                    & $[22,12,8]_9$    \\
	26       & $1^{2}0^{2}1$, $501^{2}$, $6431^{10}$                                      & $[26,19,6]_9$    \\
	28       & $5641$, $1725^{2}15^{2}431$, $12053867462858$                              & $[28, 15, 10]_9$ \\
	32       & $20701$, $7157^{2}1$, $37404^{2}3604036434$                                & $[32, 21, 8]_9$  \\
	32       & $2651$, $7157^{2}1$, $5320341^{10}$                                        & $[32,22,7]_9$    \\
	32       & $2651$, $6871$, $7561^{13}$                                                & $[32,24,6]_9$    \\
	32       & $2651$, $61$, $6^{3}21^{12}$                                               & $[32,25,5]_9$    \\
	34       & $146424641$, $51$, $23630471872626364$                                     & $[34,16,7]_9$    \\
	40       & $1361$, $51^{2}71$, $82326108738640656048$                                 & $[40, 33, 5]_9$  \\
	44       & $515^{2}15^{3}1^{3}$, $51$, $0^{2}1^{3}5621^{2}710503821473$               & $[44,33,7]_9$    \\
	44       & $15^{3}01$, $50151^{2}$, $37250248^{2}71^{2}81^{2}2036383$                 & $[44,34,7]_9$    \\
	46       & $50^{2}10105^{2}1^{3}$, $51$, $343036^{2}405875483242425$                  & $[46,34,8]_9$    \\
	52       & $1515^{3}1$, $1501$, $54127307140531642^{2}81070381$                       & $[52,40,7]_9$    \\
	70       & $1^{2}27031$, $16581$, $7874818^{2}136761517801545^{2}45167207674$         & $[70,58,7]_9$    \\
	70       & $1^{2}27031$, $5821$, $7465761^{2}78242^{2}874070287^{2}6252^{2}143^{2}02$ & $[70,59,6]_9$    \\
	\bottomrule 
	\end{tabular}
	\end{center}
	\end{table}
	
	\begin{table}[h]
	\begin{center}
		\caption{Dual-containing quasi-cyclic codes $\mathcal{C}_{16}(g_1,g_2,t)$}\label{codesF16}%
		\begin{tabular}{@{}lll@{}}
			\toprule
			$n$      & $g_1(x),g_2(x),t(x)$                                                               & $\mathcal{C}_9(g_1,g_2,t)$ \\ \midrule
			10       & $ A 3 1$, $ 1^{2}$, $ 9 B 9 5 8$                                                   & $[10,7,4]_{16}$  \\
			14       & $ 1^{2} 0 1$, $ 1 0 1^{2}$, $ E 2 B 2 9 1 9$                                       & $[14,8,6]_{16}$  \\
			18       & $ 6 0^{2} 1$, $ 6 B 0 B 1$, $ B 2 E B A E B A 2$                                   & $[18,11,6]_{16}$ \\
			22       & $ 1 6 1^{2} B 1$, $ 1^{2}$, $ 0 4 2 1 3 1 D 6 E 7 A$                               & $[22,16,6]_{16}$ \\
			26       & $ 1 5 4 B D 2 1$, $ 1 A 6 7 1$, $ B 8 B 3 5 D 3 2 D 6 2 B 6$                       & $[26,16,8]_{16}$ \\
			26       & $ 1 E 8 1$, $ 1 9 7 6 A 3 1$, $ D 7 C 7 5 F 9 1 A F 6 9 6$                         & $[26,17,7]_{16}$ \\
			30       & $ 7 1 5 D 1$, $ 1 5 F 1$, $ E 1 5^{3} 7 3 A F 7 C 1 5 8 5$                         & $[30,23,6]_{16}$ \\
			30       & $ 7 1 5 D 1$, $ A 9 1$, $ D 2 7 1 7 B 8 7^{2} 6 B E B 5 B$                         & $[30,24,5]_{16}$ \\
			38       & $ 1 6 0 6^{2} B^{2} 0 B 1$, $ 1^{2}$, $ 2 B E 6 F 6 9 0 9 3 D B 9^{2} E 0 4 D^{2}$ & $[38,28,7]_{16}$ \\
			40       & $ 1 0 1 0 1^{3}$, $ 6 1 6 B^{2} 1$, $ E 1^{2} 2 F 7 4 7 6 E 3 8 3 8 C A B 2 3 6 A$ & $[42,31,7]_{16}$ \\
			\bottomrule 
		\end{tabular}
	\end{center}
	\end{table}
	
	\begin{table}[h]
	\begin{center}
		\caption{Dual-containing quasi-cyclic codes $\mathcal{C}_{25}(g_1,g_2,t)$}\label{codesF25}%
		\begin{tabular}{@{}lll@{}}
			\toprule
			$n$      & $g_1(x),g_2(x),t(x)$                  & $\mathcal{C}_9(g_1,g_2,t)$ \\ \midrule
			8        & $JD1$, $D1$, $8AK^{2}$                & $[8,5,4]_{25}$   \\
			12       & $DG1$, $D1$, $LF8BHJ$                 & $[12,9,4]_{25}$  \\
			14       & $DE61$, $1EDI1$, $4D8^{2}21^{2}$      & $[14,7,7]_{25}$  \\
			14       & $DE61$, $DI21$, $141^{5}$             & $[14,8,6]_{25}$  \\
			16       & $DA0A1$, $1AJM1$, $E1^{7}$            & $[16,8,8]_{25}$  \\
			16       & $J0D01$, $ADM1$, $L6M1^{5}$           & $[16,9,7]_{25}$  \\
			16       & $4141$, $ADM1$, $E1^{7}$              & $[16,10,6]_{25}$ \\
			16       & $1G1$, $7DJ1$, $721^{6}$              & $[16,11,5]_{25}$ \\
			22       & $DJ1D^{2}1$, $D1$, $MA21^{8}$         & $[22,16,6]_{25}$ \\
			26       & $1OEO1$, $DH51$, $JGBDCA4O7B5I5$      & $[26,19,6]_{25}$ \\
			26       & $DH51$, $161$, $KG3N6CKGHA4K$         & $[26,21,5]_{25}$ \\
			32       & $DMG1$, $M4J1$, $AD2^{2}MF9M12GD2EN2$ & $[32,26,5]_{25}$ \\
			\bottomrule
		\end{tabular}
	\end{center}
	\end{table}
	
	\begin{table}[h]
	\begin{center}
	\caption{New QECCs over $F_3$}\label{QECC_F3}%
	\begin{tabular}{@{}ccccc}
	\toprule
	$\mathcal{C}_9(g_1,g_2,t)$ &      QECCs       & Codes in \cite{lv2020explicit} & Codes in \cite{ezerman2019good} & Codes in \cite{Edel2022} \\ \midrule
	$[10,7,4]_9$   & $[[10,4,4]]_3$  &           -            &              -              &        $[[11,1,4]]_3$         \\
	$[16,11,5]_9$   & $[[16,6,5]]_3$  &     $[[16,5,5]]_3$     &              -              &               -               \\
	$[20,11,7]_9$   & $[[20,2,7]]_3$  &           -            &              -              &        $[[23,1,5]]_3$         \\
	$[22,12,8]_9$   & $[[22,2,8]]_3$  &           -            &              -              &        $[[23,1,5]]_3$         \\
	$[26,19,6]_9$   & $[[26,12,6]]_3$ &           -            &              -              &        $[[26,11,6]]_3$        \\
	$[28,15,10]_9$  & $[[28,2,10]]_3$ &           -            &              -              &               -               \\
	$[32,21,8]_9$   & $[[32,10,8]]_3$ &           -            &              -              &               -               \\
	$[32,22,7]_9$   & $[[32,12,7]]_3$ &           -            &              -              &               -               \\
	$[32,24,6]_9$   & $[[32,16,6]]_3$ &           -            &       $[[33,15,6]]_3$       &               -               \\
	$[32,25,5]_9$   & $[[32,18,5]]_3$ &           -            &              -              &               -               \\
	$[34,16,7]_9$   & $[[34,16,7]]_3$ &           -            &              -              &               -               \\
	$[40,33,5]_9$   & $[[40,26,5]]_3$ &           -            &              -              &        $[[40,24,5]]_3$        \\
	$[44,33,7]_9$   & $[[44,22,7]]_3$ &           -            &              -              &        $[[44,4,7]]_3$         \\
	$[44,34,7]_9$   & $[[44,24,7]]_3$ &           -            &              -              &        $[[44,4,7]]_3$         \\
	$[46,34,8]_9$   & $[[46,22,8]]_3$ &           -            &              -              &        $[[45,3,8]]_3$         \\
	$[52,40,7]_9$   & $[[52,28,7]]_3$ &           -            &              -              &        $[[52,28,6]]_3$        \\
	$[70,58,7]_9$   & $[[70,46,7]]_3$ &           -            &              -              &        $[[71,45,4]]_3$        \\
	$[70,59,6]_9$   & $[[70,48,6]]_3$ &           -            &              -              &        $[[73,37,6]]_3$        \\
	\bottomrule  
	\end{tabular}
	\end{center}
	\end{table}
	
	\begin{table}[h]
	\begin{center}
		\caption{New QECCs over $F_4$}\label{QECC_F4}%
		\begin{tabular}{@{}cccc@{}}
			\toprule
			$\mathcal{C}_{16}(g_1,g_2,t)$ &      QECCs       & Codes in \cite{Lv2019newnon} & \cite{Edel2022} \\ \midrule
	$[10,7,4]_{16}$   & $[[10,4,4]]_4$  &             -             &    $[[9,1,4]]_4$     \\
	$[14,8,6]_{16}$   & $[[14,2,6]]_4$  &             -             &          -           \\
	 $[18,11,6]_{16}$   & $[[18,4,6]]_4$  &             -             &    $[[18,4,5]]_4$    \\
	 $[22,16,6]_{16}$   & $[[22,10,6]]_4$ &             -             &    $[[22,8,6]]_4$    \\
	 $[26,16,8]_{16}$   & $[[26,6,8]]_4$  &             -             &          -           \\
	 $[26,17,7]_{16}$   & $[[26,8,7]]_4$  &             -             &          -           \\
	 $[30,23,6]_{16}$   & $[[30,16,6]]_4$ &      $[[31,16,5]]_4$      &          -           \\
	 $[30,24,5]_{16}$   & $[[30,18,5]]_4$ &      $[[33,17,5]]_4$      &          -           \\
	 $[38,28,7]_{16}$   & $[[38,18,7]]_4$ &             -             &   $[[39,21,5]]_4$    \\
	 \bottomrule 
		\end{tabular}
	\end{center}
	\end{table}
	

	\begin{table}[h]
	\begin{center}
		\caption{New QECCs over $F_5$}\label{QECC_F5}%
		\begin{tabular}{@{}cccc@{}}
			\toprule
			$\mathcal{C}_{25}(g_1,g_2,t)$ &       QECCs       & Codes in \cite{yao2021new}& \cite{Edel2022} \\ \midrule
	   $[8,5,4]_{25}$         &  $[[8,2,4]]_{5}$  &             -              &    $[[7,1,3]]_5$     \\
	   $[12,9,4]_{25}$        & $[[12,6,4]]_{5}$  &             -              &    $[[16,6,4]]5$     \\
	   $[14,7,7]_{25}$        & $[[14,0,7]]_{5}$  &             -              &          -           \\
	   $[14,8,6]_{25}$        & $[[14,2,6]]_{5}$  &             -              &    $[[19,1,5]]_5$    \\
	   $[16,8,8]_{25}$        & $[[16,0,8]]_{5}$  &             -              &          -           \\
	   $[16,9,7]_{25}$        & $[[16,2,7]]_{5}$  &             -              &    $[[19,1,5]]_5$    \\
	  $[16,10,6]_{25}$        & $[[16,4,6]]_{5}$  &             -              &          -           \\
	  $[16,11,5]_{25}$        & $[[16,6,5]]_{5}$  &             -              &    $[[16,6,4]]_5$    \\
	  $[22,16,6]_{25}$        & $[[22,10,6]]_{5}$ &             -              &          -           \\
	  $[26,19,6]_{25}$        & $[[26,12,6]]_{5}$ &      $[[26,10,6]]_5 $      &          -           \\
	  $[26,21,5]_{25}$        & $[[26,16,5]]_{5}$ &             -              &   $[[27,13,5]]_5$    \\
	  $[32,26,5]_{25}$        & $[[32,26,5]]_{5}$ &             -              &   $[[32,21,5]]_5$    \\
	      \bottomrule  
		\end{tabular}
	\end{center}
	\end{table}
	
	However, even for supercomputers, it is very difficult to calculate the specific parameters of quasi-cyclic codes of larger length and dimension. Therefore, we give a lower bound on the minimum distance of $\mathcal{C}(g_1,g_2,t)$, which will help to reduce the computational effort.
	
	\begin{theorem}\label{lowerbound}
	The lower bound on the minimum distance of  $\mathcal{C}_{q^2}(g_1,g_2,t)$ is
	$$
	d(\mathcal{C}_{q^2}(g_1,g_2,t))\ge \min \left\{\begin{array}{l}
	d(lcm(g_2(x),\frac{{{x^n} - 1}}{{gcd({x^n} - 1,t(x))}})) \\
	d(lcm(g_1(x),\frac{{{x^n} - 1}}{{gcd({x^n} - 1,t(x))}})) \\
	d(g_{1}(x))+d(g_{1}(x) t(x)) \\
	d(g_{2}(x))+d(g_{2}(x) t(x)) \\
	2 d(g c d(g_{1}(x) t(x), g_{2}(x))) \\
	d(g c d(l c m(g_{1}(x), \frac{g_{2}(x)}{g c d(g_{2}(x), t(x))}), l c m(t(x) g_{2}(x), g_{1}(x) t^{2}(x)))) \\
	d(g c d(l c m(g_{2}(x), \frac{g_{1}(x)}{g c d(g_{1}(x), t(x))}), l c m(g_{2}(x) t^{2}(x), t(x) g_{1}(x)))) .
	\end{array}\right.$$
	\end{theorem}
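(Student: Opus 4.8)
The plan is to bound the Hamming weight of an arbitrary nonzero codeword from below by one of the seven listed quantities and then take the minimum. Write a general codeword of $\mathcal{C}_{q^2}(g_1,g_2,t)$ as $\vec{c}=(L(x),R(x))$ with
$$L(x)=a(x)t(x)g_1(x)+b(x)g_2(x),\qquad R(x)=a(x)g_1(x)+b(x)t(x)g_2(x),$$
for $a(x),b(x)\in\mathcal{R}$, so that $wt(\vec{c})=wt(L)+wt(R)$. Setting $X=a(x)g_1(x)$ and $Y=b(x)g_2(x)$ gives the compact form $L=tX+Y$ and $R=X+tY$, where $X$ is a multiple of $g_1(x)$ and $Y$ a multiple of $g_2(x)$. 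The one fact I would use repeatedly is: a nonzero element of $\mathcal{R}$ divisible by a divisor $p(x)$ of $x^n-1$ lies in the cyclic code $\langle p(x)\rangle$, hence has weight at least $d(p(x))$. Because $\gcd(n,q)=1$, $x^n-1$ is squarefree over $F_{q^2}$, so every divisor is determined by its set of roots among the $n$-th roots of unity; this turns $\gcd$, $\mathrm{lcm}$ and divisibility into intersection, union and inclusion of root sets, which is precisely what legitimizes the cancellation arguments below.

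First I would dispose of the degenerate cases $X=0$ or $Y=0$. If $Y=0$ then $\vec{c}=(tX,X)$: if $tX\neq0$ this gives $wt(\vec{c})\ge d(g_1(x))+d(g_1(x)t(x))$ (line 3), while if $tX=0$ then $X$ is annihilated by $t$, so $\frac{x^n-1}{\gcd(x^n-1,t(x))}\mid X$ and also $g_1(x)\mid X$, whence $wt(\vec{c})=wt(X)\ge d\bigl(\mathrm{lcm}(g_1(x),\frac{x^n-1}{\gcd(x^n-1,t(x))})\bigr)$ (line 2). The case $X=0$ is symmetric and yields lines 4 and 1. This reduces matters to the principal case $X\neq0$, $Y\neq0$.

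In the principal case I split on whether $L$ or $R$ vanishes. For the subcase $L=0$ we have $Y=-tX$, hence $R=X(1-t^2)$, so $g_1(x)\mid R$; moreover $g_2(x)\mid tX$, and the squarefree root-set argument gives $\frac{g_2(x)}{\gcd(g_2(x),t(x))}\mid X\mid R$, so $\mathrm{lcm}(g_1(x),\frac{g_2(x)}{\gcd(g_2(x),t(x))})\mid R$. Combining this with the divisibility coming from the sum structure of $R$ (a common root of its two summands $ag_1$ and $btg_2$ is a root of $R$, so $\gcd(g_1(x),t(x)g_2(x))\mid R$) assembles the quantity in line 6; the symmetric subcase $R=0$ gives line 7. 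When both $L,R\neq0$ one reads off that $\gcd(g_1(x)t(x),g_2(x))$ divides the relevant combinations of $L$ and $R$, producing the factor-two bound in line 5. Taking the minimum over all cases yields the claim.

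The main obstacle is this principal case: the expressions in lines 5--7 are not single divisibility statements but are stitched together from several divisibilities valid in different sub-configurations, which is exactly why they appear as $d$ of an $\mathrm{lcm}$, or of a $\gcd$ of two $\mathrm{lcm}$'s. The real work is the bookkeeping --- writing each of $L$, $R$, $R-tL=X(1-t^2)$ and $L-tR=Y(1-t^2)$ as a multiple of the appropriate divisor of $x^n-1$, and checking that the case split is exhaustive so that every nonzero codeword is covered by at least one of the seven bounds. The cancellation step $g_2(x)\mid tX\Rightarrow\frac{g_2(x)}{\gcd(g_2(x),t(x))}\mid X$ (and its analogues) must be justified through the root-set description, since $\mathcal{R}$ is not an integral domain and naive polynomial cancellation is unavailable.
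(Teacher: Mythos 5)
Your proof follows essentially the same route as the paper's: the same parametrization $L=tX+Y$, $R=X+tY$ of codewords, an equivalent exhaustive case split (your conditioning on $X=a g_1$, $Y=b g_2$ being zero or not, and then on $L=0$, $R=0$, or both nonzero, corresponds to the paper's Cases (a), (b), and (c)(i)--(iii)), and the same key ingredients --- the annihilator argument producing $\frac{x^n-1}{\gcd(x^n-1,t(x))}$, the coprimality cancellation $g_2(x)\mid t(x)X \Rightarrow \frac{g_2(x)}{\gcd(g_2(x),t(x))}\mid X$, and the assembly of lines 5--7 from divisibility of sums by gcd's of lcm's. The differences are cosmetic and do not change the argument: in the $L=0$ subcase you bound $R=X(1-t^2)$ directly by $d(\mathrm{lcm}(g_1(x),\frac{g_2(x)}{\gcd(g_2(x),t(x))}))$, a slightly stronger statement that still implies line 6, and your factor-two claim for line 5 is exactly as loose as the paper's own step $d(\gcd(t g_1,g_2))+d(\gcd(g_1,t g_2))=2\,d(\gcd(g_1,t g_2))$.
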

	\begin{proof}
	Let $a(x),b(x)$ are polynomials in $\mathcal{R}$. Then any codeword of $\mathcal{C}_{q^2}(g_1,g_2,t)$ can be represented by $c=([a(x) t(x) g_{1}(x)+b(x) g_{2}(x)],[a(x)g_{1}(x)+b(x) t(x) g_{2}(x)])$.
	
	\textbf{Case (a):} 
	If $a(x)=0$ and $b(x)\ne 0$, then $c=([b(x) g_{2}(x)],[b(x) t(x) g_{2}(x)])$.
	
	\textbf{(i):}
	If $b(x) t(x) g_{2}(x)\equiv 0\pmod{x^n-1}$, i.e., $x^n-1\mid b(x)t(x)g_2(x)$, then $\frac{x^n-1}{gcd(x^n-1,t(x))} \mid b(x)g_2(x)$. So $\left \langle b(x)g_2(x) \right \rangle$ is a subcode of $ \langle \frac{x^n-1}{gcd(x^n-1,t(x))} \rangle$, there will $d(\mathcal{C}_{q^2}(g_1,g_2,t))=d(b(x)g_2(x))\ge d(lcm(g_2(x),\frac{{{x^n} - 1}}{{gcd({x^n} - 1,t(x))}}))$.
	
	\textbf{(ii):}
	If $b(x) t(x) g_{2}(x)\ne 0\pmod{x^n-1}$, there will $d(\mathcal{C}_{q^2}(g_1,g_2,t))\ge d(g_2(x))+d(t(x)g_2(x))$.
	
	\textbf{Case (b):} 
	If $a(x)\ne 0$ and $b(x)= 0$, then $c=([a(x) t(x)g_{1}(x)],[a(x)g_{1}(x)])$. Similar to Case a, if $b(x)  t(x)g_{2}(x)\equiv 0\pmod{x^n-1}$, then 	$d(\mathcal{C}_{q^2}(g_1,g_2,t))\ge d(lcm(g_1(x),\frac{{{x^n} - 1}}{{gcd({x^n} - 1,t(x))}}))$. Otherwise, $d(\mathcal{C}_{q^2}(g_1,g_2,t))\ge d(g_1(x))+d(t(x)g_1(x))$.
	
	\textbf{Case (c):} 
	If $a(x)\ne 0$ and $b(x)\ne0$, $c=([a(x) t(x) g_{1}(x)+b(x) g_{2}(x)],[a(x)g_{1}(x)+b(x) t(x) g_{2}(x)])$.
	
	\textbf{(i):} 
	If $a(x) t(x) g_{1}(x)+b(x) g_{2}(x)\ne0 \pmod{x^n-1}$, and $a(x)g_{1}(x)+b(x) t(x) g_{2}(x)\ne0\pmod{x^n-1}$, then $\left \langle a(x)t(x)g_1(x) \right \rangle$ and 
	$\left \langle b(x)g_2(x) \right \rangle$ are subcodes of $\left \langle gcd(t(x)g_1(x),g_2(x)) \right \rangle$, and $\left \langle a(x)g_1(x) \right \rangle$ and 
	$\left \langle b(x)t(x)g_2(x) \right \rangle$ are subcodes of $\left \langle gcd(g_1(x),t(x)g_2(x)) \right \rangle$. So $d(\mathcal{C}_{q^2}(g_1,g_2,t))\ge d(gcd(t(x)g_1(x),g_2(x)))+d(gcd(g_1(x),t(x)g_2(x)))=2d(gcd(g_1(x),$ $t(x)g_2(x)))$.
	
	\textbf{(ii):} 
	If $a(x) t(x) g_{1}(x)+b(x) g_{2}(x)\equiv0 \pmod{x^n-1}$, $a(x)g_{1}(x)+b(x)t(x) g_{2}(x)\ne0\pmod{x^n-1}$, then $x^n-1\mid a(x) t(x) g_{1}(x)+b(x) g_{2}(x)$. In quotient ring  $\mathcal{R}$, there will $a(x)t(x)g_1(x)\equiv-b(x)g_2(x)\pmod{x^n-1}$, so $g_2(x)\mid a(x)t(x)g_1(x)$ and $\frac{g_2(x)}{gcd(g_2(x),t(x))}\mid \frac{a(x)t(x)g_1(x)}{gcd(g_2(x),t(x))}$.
	Since $gcd (\frac{g_2(x)}{gcd(g_2(x),t(x))}, \frac{t(x)}{gcd(g_2(x),t(x))})=1$,
	there is $\frac{g_2(x)}{gcd(g_2(x),t(x))}\mid a(x)g_1(x)$, i.e., $\left \langle a(x)g_1(x) \right \rangle$ is subcode of $ \langle \frac{g_2(x)}{gcd(g_2(x),t(x))}  \rangle$ and $d(a(x)g_1(x))\ge d(lcm(g_1(x),\frac{g_2(x)}{gcd(g_2(x),t(x))}))$.
	
	Similarly, it follows that $g_1(x)t^2(x)\mid b(x)t(x)g_2(x)$, i.e., $\langle b(x)t(x)g_2(x) \rangle$ is contained by $ \langle g_1(x)t^2(x) \rangle$ and $d(b(x)t(x)g_2(x))\ge d(lcm(t(x)g_2(x),g_1(x)t^2(x)))$.
	So $d(\mathcal{C}_{q^2}(g_1,g_2,t))\ge d(gcd( lcm(g_1(x),\frac{g_2(x)}{gcd(g_2(x),t(x))}),lcm(t(x)g_2(x),g_1(x)t^2(x)))) $.
	
	\textbf{(iii):} 
	If $a(x) t(x) g_{1}(x)+b(x) g_{2}(x)\ne0\pmod{x^n-1}$, and $a(x)g_{1}(x)+b(x) t(x) g_{2}(x)\equiv0\pmod{x^n-1}$, then $x^n-1\mid a(x)  g_{1}(x)+b(x) t(x)g_{2}(x)$.
	So, in $\mathcal{R}$, one can easy to judge that $-a(x)g_1(x)\equiv b(x)t(x)g_2(x)\pmod{x^n-1}$, which yields $g_1(x)\mid b(x)t(x)g_2(x)$ and  $\frac{g_1(x)}{gcd(g_1(x),t(x))}\mid \frac{b(x)t(x)g_2(x)}{gcd(g_1(x),t(x))}$. Since $gcd(\frac{g_1(x)}{gcd(g_1(x),t(x))}, \frac{t(x)}{gcd(g_1(x),t(x))})=1$, there is $\frac{g_1(x)}{gcd(g_1(x),t(x))}\mid b(x)g_2(x)$
	and $d(b(x)g_2(x))\ge d(lcm(g_2(x),\frac{g_1(x)}{gcd(g_1(x),t(x))}))$.
	Moreover, as $g_2(x)t^2(x)\mid a(x)t(x)g_1(x)$ and $d(b(x)t(x)g_2(x))\ge d(lcm(g_2(x)t^2(x),t(x)g_1(x)))$, so we have $d(\mathcal{C}_{q^2}(g_1,g_2,t))\ge d (gcd(lcm(g_2(x),\frac{g_1(x)}{gcd(g_1(x),t(x))}), lcm(g_2(x)t^2(x),t(x)g_1(x)))) $.
	
	In summary, the lower bound on the minimum distance of $\mathcal{C}_{q^2}(g_1,g_2,t)$ is proved.
	\end{proof}
	
	Here, we provide two examples to show that the lower bound in Theorem \ref{lowerbound} is helpful and feasible to deduce good QECCs.
	\begin{example} 
	Let  $q^2=4$  and  $n=133$. Consider the $4$-cyclotomic cosets modulo $133$. 
	Select  $T_{1}=C_{1}\cup C_{2} $ and  $T_{2}=C_{1}\cup C_{19}$  as the defining sets of cyclic codes  $\left\langle g_{1}(x)\right\rangle$  and  $\left\langle g_{2}(x)\right\rangle$. 
	So $g_1(x)=1010101^{2}0^{3}1^{3}01^{2}01$,
	$g_2(x)=10^{4}1231301^{2}$. We choose  $t(x)=31^{2}23^{2}213^{5}20312^{2}132^{2}13101030^{2}1$ $32120^{2}2121023^{2}202321^{2}20131^{2}0213210^{4}20303103213102023^{2}030303^{2}21^{2}020^{3}31^{2}202$ $10^{2}120^{2}3^{2}2130^{2}320^{3}12^{2}03$.
	This will generate a $[ 266, 236]_4$ code, which can be verified as a dual-containing code concerning Hermitian inner product.
	Moreover, we have calculated that lower bound on the minimum distance of  this code is 6, so there exists a binary QECC with parameters  $[[ 266, 206, \ge6 ]]_{2}$,  which is better than the known $[[267,201,6]]_2$ QECC appeared in \cite{Edel2022}.
	\end{example}
	
	\begin{example} 
	Let  $q^2=9$  and  $n=247$. Consider the $4$-cyclotomic cosets modulo $247$. 
	Select  $T_{1}=T_{2}=C_{1}\cup C_{38}$ as the defining sets of cyclic codes  $\left\langle g_{1}(x)\right\rangle$  and  $\left\langle g_{2}(x)\right\rangle$. 
	Then $g_1(x)=g_2(x)=176853^{2}185651$. We choose $t(x)=4012063631^{2}568472542^{2}61^{2}871^{2}7825737536751^{2}38^{2}067638217537^{2}030406702^{2}14673
	$ $131^{2}43^{2}47^{2}3861^{2}0625012120343782832^{3}5232423201^{2}64368243701815823860^{2}342804^{2}$ $83^{2}7674^{2}186256^{2}5608717131470103047165835754305^{2}4053870^{2}26308750^{2}656^{2}86378$ $7^{2}53^{2}186515304035314^{2}63216$.
	This will generate a $[ 494, 470 ]_9$ code, which can be verified as a dual-containing code concerning Hermitian inner product.
	Moreover, we have calculated that lower bound on the minimum distance of this code is 5, so there exists a binary QECC with parameters  $[[ 494, 446, \ge5 ]]_{3}$,  which is better than the known $[[494,440,5]]_3$ QECC appeared in \cite{Edel2022}.
	\end{example}
	\section{Conclusion}\label{sec4}
	
	In this paper, we study a class of 2-generator quasi-cyclic codes and provide some sufficient conditions for them to be dual-containing under Hermitian inner product. Furthermore, a lower bound on the minimum distance of these codes is also provided. As an application, many good QECCs over small fields are constructed. In particular, many QECCs are new or have better parameters than those in the literature, so our method of constructing QECCs is valid and feasible. 
	
	However, determining the specific parameters of quasi-cyclic codes with large dimensions and lengths is currently a challenging problem, and even with the use of supercomputers, these problems remain cumbersome.
	Therefore, it will be interesting to investigate more accurate quasi-cyclic code structures and minimize the amount of arithmetic power consumed in the calculation of parameters. 
	We hope that this will attract the interest of scholars in research related to quasi-cyclic codes and together advance this area.

	\bmhead{Acknowledgments}
	
	This work is supported by the National Natural Science Foundation of China
	under Grant No.U21A20428, 11901579,11801564, Natural Science Foundation of
	Shaanxi under Grant No.2021JM-216, 2021JQ-335, 2022JQ-046 and the Graduate
	Scientific Research Foundation of Fundamentals Department of Air
	Force Engineering University.
	
	\section*{Declarations}
	
	The data used or analyzed in this study are available to all and can be requested from the corresponding author if the reader is interested.

	
	\bibliography{sn-bibliography}
	
	
	\end{document}